\newtheorem{proposition}{Proposition}
\newtheorem{corollary}{Corollary}
\newtheorem{lemma}{Lemma}
\theoremstyle{definition}
\newtheorem{problem}{Problem}
\newtheorem{definition}{Definition}
\newtheorem{example}{Example}
\newtheorem{remark}{Remark}
\newcommand{\range}[1]{\{1,\dots,#1\}}
\newcommand{\id}{\mathbbm{1}}
\newcommand{\card}[1]{|{#1}|}
\newcommand{\Q}{\mathbb{Q}}
\newcommand{\F}{{\mathbb{F}_q}}
\newcommand{\e}{{\{e\}}}
\DeclareMathOperator{\Hom}{Hom}
\DeclareMathOperator{\Aut}{Aut}
\DeclareMathOperator{\Ker}{Ker}
\DeclareMathOperator{\Img}{Im}
\DeclareMathOperator{\GL}{GL}
\DeclareMathOperator{\M}{M}
\DeclareMathOperator{\wt}{wt}
\DeclareMathOperator{\swc}{swc}
\title[Extension property for vector space alphabets]{When the extension property does not hold for vector space alphabets}
\author{Serhii Dyshko}
\email{dyshko@univ-tln.fr}
\begin{document}
\maketitle

\section{Introduction}
In our previous work \cite{d-when-the-ep-does-not-hold} we showed that the extension property holds for linear codes over a finite module alphabet with respect to the symmetrized weight composition if and only if the socle of the alphabet is cyclic. For alphabets with a noncyclic socle we proved the existence of an unextendable isometry. However, our construction was implicit.

This paper deals with a particular case of an alphabet with a noncyclic socle, a vector space alphabet. Here we give an explicit construction of a code with an unextendable isometry, see \Cref{sec:main-construction}. Moreover, the code and the isometry have nice additional properties, see \Cref{thm:properties-of-vs-code} and \Cref{thm-unextnedability}. Unlike the general case, where we do not know the length of the constructed code, for vector space alphabet we show that the length can be relatively small. For the symmetrized weight composition built on the trivial group we give the lover bound on the length of a code with an unextendable isometry, see \Cref{thm-bound-for-trivial-group}.


In \cite{d-when-the-ep-does-not-hold} we also introduced the notion of $G$-pseudo-injectivity for finite modules. It is related to the extension property for codes of the length $1$. We characterize this property in the context of vector spaces, see \Cref{thm:pi-ep}.

\section{Preliminaries}
Let $\F$ be a finite field and let $A$ be an $\F$-linear vector space of dimension $\ell$, where $q$ is a prime power and $\ell$ is a positive integer. Denote by $\GL(A)$ the group of all $\F$-linear automorphism of $A$. This group is isomorphic to the group $\GL_\ell(\F)$ of $\ell \times \ell$ invertible matrices with entries from $\F$. By introducing a basis in $A$ we identify $\GL(A)$ and $\GL_\ell(\F)$.

Let $G$ be a subgroup of $\GL(A)$. The group $G$ acts on the vector space $A$ (from the right). Denote by $A/G$ the set of orbits of the action of $G$ on $A$.

A \emph{closure} $\bar{G}$ of a subgroup $G < \GL(A)$ is the set of those elements in $\GL(A)$ that preserve the orbits of $G$,
\begin{equation*}
	\bar{G} = \{ g \in \GL(A) \mid \forall O \in A/G, g(O) = O  \}\;.
\end{equation*}
We say that $G$ is \emph{closed} if $G = \bar{G}$.	
More on a closure of a group and its properties see in \cite{wood-aut-iso}.

Let $n$ be a positive integer and consider a vector space $A^n \cong \mathbb{F}_q^{n\ell}$.
Define a map $\swc_G: A^n \times A/G \rightarrow \Q$, called the \emph{symmetrized weight composition}, built on the group $G$. For each $a \in A^n$, $O \in A/G$,
\begin{equation*}
\swc_G(a)(O) =  \card{ \{i \in \range{n} \mid a_i \in O\} }\;.
\end{equation*}
The \emph{Hamming weight} $\wt: A^n \rightarrow \{0, \dots, n\}$ is a function that counts the number of nonzero coordinates.
There is always a zero orbit $\{0\}$ in $A/G$. For each $a \in A^n$, $\swc_G(a)(\{0\}) = n - \wt(a)$.

Consider an $\F$-linear code $C \subseteq A^n$ and an $\F$-linear map $f: C\rightarrow A^n$. The map $f$ is called an \emph{$\swc_G$-isometry} if $f$ preserves $\swc_G$. We call $f$ a \emph{Hamming isometry} if $f$ preserves the Hamming weight.

Let $W$ be a vector space isomorphic to $C$.
Let $\lambda \in \Hom_\F(W, {A^n})$ be a map such that $\lambda(W) = C$.
Present the map $\lambda$ in the form $\lambda = (\lambda_1,\dots, \lambda_n)$, where $\lambda_i \in \Hom_\F(W,{A})$ is a projection on the $i$th coordinate, for $i \in \range{n}$.
Define $\mu = f\lambda \in \Hom_\F(W,A^n)$.

A map $h:A \rightarrow A$ is called \emph{$G$-monomial} if there exist a permutation $\pi \in S_n$ and automorphisms $g_1, g_2, \dots, g_n \in \bar{G}$ such that for any $a \in A^n$
\begin{equation*}
h\left( (a_1, a_2, \dots,a_n) \right) = \left( g_1(a_{\pi(1)}), g_2(a_{\pi(2)}) \dots, g_n(a_{\pi(n)})\right)
\end{equation*}

It is not difficult to show that a map $f \in \Hom_\F(A^n, A^n)$ is an $\swc_G$-isometry if and only it is a $G$-monomial map.

We say that $A$ has an \emph{extension property} with respect to $\swc_G$ if for any code $C \subseteq A^n$, each $\swc_G$-isometry $f \in \Hom_\F(C,A^n)$ extends to a $G$-monomial map. 

It is a well-known fact that vector spaces satisfy the property of \emph{pseudo-injectivity}: any isomorphism between two subspaces in $A$ extends to an automorphism of the whole space $A$.
A generalization of this property follows.
\begin{definition}\label{def:psinj}
	A vector space $A$ is $G$-\emph{pseudo-injective} if for any subspace $V \subseteq A$, any $\F$-linear injective map  $f: V \rightarrow A$ that preses the orbits of $G$, i.e. for each $O \in A/G$,
	$f(V \cap O) \subseteq O$,
	extends to an element of $\bar{G}$.
\end{definition}

In our previous work \cite{d-when-the-ep-does-not-hold} we proved the following two propositions.
\begin{proposition}\label{thm:isometry-criterium-vs}
	The map $f \in \Hom_\F (C,A^n)$ is an $\swc_G$-isometry if and only if, for any $O \in A/G$, the following equality holds,
	\begin{equation}\label{eq-main-space-equation}
	\sum_{i=1}^n \id_{\lambda_i^{-1}(O)} = \sum_{i=1}^n \id_{\mu_i^{-1}(O)}\;.
	\end{equation}	
	If $f$ extends to a $G-$monomial map, then there exists a permutation $\pi \in S_n$ such that for each $O \in A/G$ the equality holds,
	\begin{equation}\label{eq-condition}
	\lambda_i^{-1}(O) = \mu_{\pi(i)}^{-1}(O)\;.
	\end{equation}
	If $A$ is $G$-pseudo-injective and \cref{eq-condition} holds, then $f$ extends to a $G$-monomial map.
\end{proposition}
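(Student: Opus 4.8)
The plan is to handle the three assertions separately; the first two amount to unwinding definitions, while the third is where the pseudo-injectivity hypothesis does the work.

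For the equivalence \cref{eq-main-space-equation} I would rewrite $\swc_G$ through the parametrization $\lambda$. For $w \in W$ and an orbit $O$, the coordinate $\lambda_i(w)$ lies in $O$ precisely when $w \in \lambda_i^{-1}(O)$, so
\[
	\swc_G(\lambda(w))(O) = \card{\{i \mid \lambda_i(w) \in O\}} = \sum_{i=1}^n \id_{\lambda_i^{-1}(O)}(w),
\]
and the same computation applies to $\mu$. Since $\mu = f\lambda$ and $\lambda$ maps $W$ onto $C$, the statement that $f$ preserves $\swc_G$ on every codeword is equivalent to $\swc_G(\lambda(w)) = \swc_G(\mu(w))$ for all $w \in W$; evaluating at each $O$ and comparing the two sums pointwise in $w$ gives exactly \cref{eq-main-space-equation}, viewed as an identity of functions on $W$.

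For the necessity of \cref{eq-condition}, suppose $f$ is the restriction of a $G$-monomial map with permutation $\tau$ and automorphisms $g_1,\dots,g_n \in \bar{G}$. Reading off the $j$th coordinate of the image of $\lambda(w)$ yields $\mu_j = g_j \circ \lambda_{\tau(j)}$. Since $\bar{G}$ is a group whose elements fix every orbit setwise, both $g_j$ and $g_j^{-1}$ preserve orbits, so $\mu_j^{-1}(O) = \lambda_{\tau(j)}^{-1}(g_j^{-1}(O)) = \lambda_{\tau(j)}^{-1}(O)$. Setting $\pi = \tau^{-1}$ turns this into $\lambda_i^{-1}(O) = \mu_{\pi(i)}^{-1}(O)$, as required.

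The sufficiency statement is the main point. After composing $f$ with the $G$-monomial coordinate permutation determined by $\pi$ (a $G$-monomial map, and composites of $G$-monomial maps are again $G$-monomial), I may assume $\pi = \mathrm{id}$, so that $\lambda_i^{-1}(O) = \mu_i^{-1}(O)$ for every $i$ and every $O$. It then suffices to find, for each $i$, an automorphism $g_i \in \bar{G}$ with $g_i \circ \lambda_i = \mu_i$, because the map $h$ with $h(a)_i = g_i(a_i)$ is $G$-monomial and restricts to $f$ on $C$. Applying the hypothesis to the zero orbit gives $\Ker \lambda_i = \Ker \mu_i$, so the assignment $\phi_i(\lambda_i(w)) := \mu_i(w)$ is a well-defined $\F$-linear isomorphism from $\Img \lambda_i$ onto $\Img \mu_i$. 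The decisive check is that $\phi_i$ preserves orbits: if $v = \lambda_i(w)$ lies in $O$, then $w \in \lambda_i^{-1}(O) = \mu_i^{-1}(O)$, hence $\phi_i(v) = \mu_i(w) \in O$. Thus $\phi_i$ is an orbit-preserving injection defined on the subspace $\Img \lambda_i \subseteq A$, and $G$-pseudo-injectivity extends it to an element $g_i \in \bar{G}$, which satisfies $g_i \circ \lambda_i = \mu_i$ by construction. I expect this verification — that the coordinatewise isomorphism genuinely sends each $V \cap O$ into $O$ so that \Cref{def:psinj} applies — to be the crux; once it is in place, the reduction in $\pi$ and the assembly of the $g_i$ into a single monomial map are routine.
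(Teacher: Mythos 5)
Your proof is correct, and each step checks out: the pointwise identity $\swc_G(\lambda(w))(O)=\sum_{i=1}^n\id_{\lambda_i^{-1}(O)}(w)$ for the first part, the fact that every $g\in\bar{G}$ satisfies $g^{-1}(O)=O$ so that $\mu_j=g_j\circ\lambda_{\tau(j)}$ yields \cref{eq-condition} with $\pi=\tau^{-1}$, and the reduction to $\pi=\mathrm{id}$ followed by applying \Cref{def:psinj} to the well-defined orbit-preserving map $\phi_i\colon\Img\lambda_i\to A$ (well-defined because the zero orbit gives $\Ker\lambda_i=\Ker\mu_i$). Note that this paper states the proposition without proof, importing it from \cite{d-when-the-ep-does-not-hold}; your argument is the standard one this formulation is designed for, so there is no divergence to report.
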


\begin{proposition}\label{thm:pi-ep}
	A vector space $A$ is $G$-pseudo-injective if and only if for any code $C \subset A^1$, each $\swc_G$-isometry $f \in \Hom_\F(C,A)$ extends to a $G$-monomial map.
\end{proposition}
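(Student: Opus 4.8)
The plan is to prove the equivalence by unwinding both definitions once the length is specialized to $n=1$, where everything collapses. I would first record two simplifications. A $G$-monomial map is built from a permutation $\pi \in S_n$ and automorphisms $g_1,\dots,g_n \in \bar{G}$; for $n=1$ the permutation $\pi \in S_1$ is forced to be the identity, so a $G$-monomial map is nothing but an element of the closure $\bar{G}$, and the phrase ``$f$ extends to a $G$-monomial map'' means ``$f$ extends to an element of $\bar{G}$''. Likewise, for a single coordinate $a \in A^1 = A$ the symmetrized weight composition reduces to the orbit indicator, $\swc_G(a)(O) = \id_O(a)$, because $\card{\{i \in \range{1} \mid a_i \in O\}}$ equals $1$ if $a \in O$ and $0$ otherwise.

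The key step is then to identify the maps occurring in the two statements. An $\F$-linear map $f \colon C \to A$ preserves $\swc_G$ exactly when $\id_O(c) = \id_O(f(c))$ for all $c \in C$ and all $O \in A/G$, that is, when $c$ and $f(c)$ always lie in the same orbit; since $A/G$ partitions $A$, this is precisely the orbit-preservation condition $f(C \cap O) \subseteq O$ of \Cref{def:psinj}. I would also observe that any such $f$ is automatically injective: the zero orbit equals $\{0\}$, so $f(c) = 0$ forces $c = 0$. Consequently the $\swc_G$-isometries $C \to A$ are exactly the injective $\F$-linear orbit-preserving maps, and the injectivity clause of \Cref{def:psinj} is in fact already implied by linearity together with orbit-preservation.

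With these identifications the two implications are immediate. If $A$ is $G$-pseudo-injective, then any $\swc_G$-isometry $f \colon C \to A$ is injective and orbit-preserving, hence extends to some $g \in \bar{G}$, which is a $G$-monomial map for $n=1$; this gives the length-$1$ extension property. Conversely, if the length-$1$ extension property holds, then any injective $\F$-linear orbit-preserving $f \colon V \to A$ is an $\swc_G$-isometry by the previous paragraph, hence extends to a $G$-monomial map, i.e.\ to an element of $\bar{G}$, which is exactly the conclusion required by \Cref{def:psinj}.

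Since the argument is mostly bookkeeping, there is no serious obstacle; the points that demand care are the automatic injectivity of an $\swc_G$-isometry and the precise matching of the two formulations of orbit-preservation. As an alternative route one could argue through \Cref{thm:isometry-criterium-vs}: for $n=1$ its equality \cref{eq-main-space-equation} becomes $\lambda^{-1}(O) = \mu^{-1}(O)$ for every $O$, which is condition \cref{eq-condition} with the trivial permutation. The third clause of that proposition then yields the extension under $G$-pseudo-injectivity, while the first clause lets one recognize every injective orbit-preserving map as an isometry, so that the assumed length-$1$ extension property supplies the reverse implication.
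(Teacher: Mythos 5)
Your proof is correct; note that this paper never actually proves \Cref{thm:pi-ep} (it is imported from \cite{d-when-the-ep-does-not-hold}), and your argument is exactly the expected definitional unwinding at $n=1$: a $G$-monomial map on $A^1$ is just an element of $\bar{G}$, preserving $\swc_G$ is equivalent to the orbit-preservation clause of \Cref{def:psinj} because the orbits partition $A$, and injectivity is automatic since $\{0\}$ is itself an orbit, so the two properties quantify over the same class of maps. The one point you leave implicit is the mismatch between the strict inclusion $C \subset A^1$ in the proposition and the case $V = A$ allowed in \Cref{def:psinj}, but this is harmless: a bijective orbit-preserving $\F$-linear map of $A$ lies in $\bar{G}$ by the very definition of the closure, so that case of pseudo-injectivity holds unconditionally.
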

We prove here several additional properties of vector spaces.
\begin{proposition}
	Any vector space $A$ is $\GL(A)$-pseudo-injective and $\e$-pseudo-injective, where $\e$ is a trivial subgroup.
\end{proposition}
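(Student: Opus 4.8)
The plan is to treat the two groups separately. In each case I would first identify the orbits of $G$ on $A$ and the closure $\bar{G}$, and then verify the extension condition of \Cref{def:psinj} directly. No deep input should be needed beyond the orbit bookkeeping together with the classical fact that any injective linear map from a subspace extends to an automorphism.

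For $G = \GL(A)$, I would note that $\GL(A)$ acts transitively on the nonzero vectors, so $A/\GL(A)$ consists of exactly the two orbits $\{0\}$ and $A \setminus \{0\}$. Since every element of $\GL(A)$ fixes $0$ and permutes the nonzero vectors, both orbits are preserved and hence $\GL(A)$ is closed, $\overline{\GL(A)} = \GL(A)$. Now given a subspace $V \subseteq A$ and an injective $\F$-linear map $f : V \to A$, the orbit-preservation condition is automatic: $f(0) = 0$ handles the zero orbit, and injectivity sends nonzero vectors to nonzero vectors, handling the other. It then remains to extend an arbitrary injective linear $f : V \to A$ to an automorphism of $A$, which is the classical pseudo-injectivity of vector spaces: choose a basis of $V$, observe that its image under $f$ is linearly independent, complete both $V$ and $f(V)$ to bases of $A$, and define the extension by any bijection between the two complementary sets of basis vectors. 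This extension lies in $\GL(A) = \overline{\GL(A)}$, as required.

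For $G = \e$, the trivial group has only singleton orbits, so $A/\e = \{\,\{a\} \mid a \in A\,\}$, and any $g \in \GL(A)$ preserving all orbits must satisfy $g(a) = a$ for every $a$, whence $\bar{\e} = \{\mathrm{id}_A\}$. Here the orbit-preservation condition becomes rigid rather than vacuous: if $f : V \to A$ preserves the orbits, then for each $a \in V$ the orbit $\{a\}$ meets $V$ and the condition $f(\{a\}) \subseteq \{a\}$ forces $f(a) = a$. Thus $f$ is the identity embedding of $V$, which extends to $\mathrm{id}_A \in \bar{\e}$.

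I expect the argument to be essentially routine; the only points requiring care are the two closure computations and the contrast between the cases, namely that for $\GL(A)$ the orbit condition imposes nothing and so the full strength of the basis-extension argument is used, whereas for $\e$ the condition is so strong that it pins $f$ down to the identity before any extension is needed.
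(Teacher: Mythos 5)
Your proof is correct and follows essentially the same route as the paper: identify the two orbits of $\GL(A)$ and the singleton orbits of $\e$, note both groups are closed, invoke classical pseudo-injectivity (basis completion) in the first case, and observe that orbit preservation forces $f$ to be the identity embedding in the second. You merely spell out the closure computations and the basis-extension argument that the paper leaves implicit, so no substantive difference remains.
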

\begin{proof}
	First, note that these subgroups are closed.
	The set $A/\GL(A)$ has only two orbits, $\{0\}$ and $A \setminus \{ 0 \}$. As it was stated above, an isomorphism between two different subspaces extends to an automorphism of the ambient space $A$.
	
	Since $A/\e = \{ \{ x \} \mid x \in A \}$, for any subspace $V$, any injective map $f: V\rightarrow A$, such that $f(x) = x$, for all $x \in V$, extends to a trivial map $e \in G$.
\end{proof}

\begin{lemma}\label{lemma:a2}
	Let $U \subseteq V$ be a subspace of dimension smaller than $2$ and let $G$ be a subgroup of $\GL(V)$. Any injective map $f: U \rightarrow V$ that preserves the orbits of $G$ extends to an element of $G$.
\end{lemma}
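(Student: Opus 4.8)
The plan is to split on the dimension of $U$, which by hypothesis is either $0$ or $1$, and to exploit the fact that an $\F$-linear map out of a space of dimension at most $1$ is completely determined by where it sends a single generator. (I read $f$ as $\F$-linear here, in keeping with the convention used throughout the paper and in \Cref{def:psinj}; this is essential, since only a linear $f$ can possibly agree with a linear automorphism on all of $U$.) I would first dispose of the trivial case $\dim U = 0$, where $U = \{0\}$: linearity forces $f(0) = 0$, so the identity automorphism $e \in G$ already extends $f$.

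The substantive case is $\dim U = 1$. Here I would fix a nonzero generator $u$ of $U$, so $U = \F u$, and let $O \in V/G$ be the $G$-orbit containing $u$. Since $u \in U \cap O$, the orbit-preservation hypothesis $f(U \cap O) \subseteq O$ forces $f(u) \in O$; that is, $f(u)$ lies in the same $G$-orbit as $u$. By the definition of an orbit, there is therefore an element $g \in G$ with $g(u) = f(u)$.

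It then remains to verify that this $g$ extends $f$. As $f$ and $g$ are both $\F$-linear and agree at the generator $u$, for every scalar $c \in \F$ we have $f(cu) = c\,f(u) = c\,g(u) = g(cu)$, so $f$ and $g$ coincide on all of $U = \F u$; thus $g \in G$ is the desired extension. I do not anticipate any real obstacle: the argument is forced once one notes that orbit-preservation pins down the image of the single generator up to the $G$-action, and linearity then propagates this agreement across the whole line. The only subtlety worth flagging is that the extension can be taken in $G$ itself, not merely in the closure $\bar{G}$, precisely because membership of $f(u)$ in the orbit of $u$ is witnessed by an honest element of $G$.
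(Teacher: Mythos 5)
Your proof is correct and follows essentially the same route as the paper's own argument: the same split into $\dim U = 0$ and $\dim U = 1$, the same use of orbit-preservation to find $g \in G$ with $g(u) = f(u)$ at a generator $u$, and the same propagation by $\F$-linearity across the line $U = \F u$ (your handling of the zero-dimensional case via the identity $e \in G$ is in fact marginally cleaner, since the paper only invokes $\bar{G}$ there while the statement promises an element of $G$).
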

\begin{proof}
	If $U = \{ 0 \}$, then any $f: \{0\} \rightarrow V$ is a trivial map and hence extends to an element of $\bar{G}$.
	Let $u \in V \setminus \{0\}$ be such that $U = \langle u \rangle_\F$. Let $O$ be an orbit in $V/G$ such that $u \in O$. Since $f$ preserves the orbits of $G$, there exists $v \in O$ such that $v = f(u)$ and there exists $g \in G$ such that $v = g(u)$. Then, for any $\lambda \in \F$, $f(\lambda u) = \lambda f(u) = \lambda v = \lambda g(u) = g (\lambda u)$. But $\{ \lambda u \mid \lambda \in \F \} = U$ and therefore, $f$ extends to an element of $G$
\end{proof}

\begin{remark}
	In the same way as in the proof of \Cref{lemma:a2} we can prove the following generalized fact. Consider a ring $R$ with identity and a finite $R$-module $A$. It is true that for any cyclic submodule $U \subseteq A$ any injective map $f \in \Hom_R(U,A)$ which preserves the orbits of $G \leq \Aut_R(A)$, extends to a $G$-monomial map.
\end{remark}

\section{Extension theorem for vector space alphabets}\label{sec:vec-space-ep}
\subsection{Main construction}\label{sec:main-construction}
Let $A$ and $W$ be two $\F$-linear vector spaces of dimension $2$ and $4$ correspondingly. Fix bases in $A$ and $W$. Consider the trivial subgroup $\e < \GL(A)$. Each orbit of $\e$ contains only one point and the group $\e$ is closed.

Let $\chi(x) = x^2 + \alpha x + \beta$ be an irreducible polynomial over $\F$, where $\beta, \alpha \in \F$. Then $Q: \F \times \F \rightarrow \F$, $(a,b) \mapsto a^2 + \alpha ab + \beta b^2$, is a non-degenerate quadratic form.
	
Let $\mathbb{P}_1(\F)$ be a projective line, $\card{\mathbb{P}_1(\F)} = q + 1$.
For any $[a:b] \in \mathbb{P}_1(\F)$, define a matrix $\Lambda_{[a:b]} \in \M_{4 \times 2}(\F)$,
\begin{align*}
		\Lambda_{[a:b]} =
		\frac{1}{Q(a,b)}\left(\begin{array}{cc}
			-ba & -b^2\\
			a^2 & ab\\
			\beta b^2 & -\alpha b^2 - ab\\
			-\beta a b & a^2 + \alpha ab
		\end{array}\right)\;,
\end{align*}
and let $M_{[a:b]} \in \M_{4 \times 2}(\F)$ be a matrix obtained from $	\Lambda_{[a:b]}$ by flipping the second and the third row,
\begin{align*}
M_{[a:b]} =
\frac{1}{Q(a,b)}\left(\begin{array}{cc}
-ba & -b^2\\
\beta b^2 & -\alpha b^2 - ab\\
a^2 & ab\\
-\beta a b & a^2 + \alpha ab
\end{array}\right)\;.
\end{align*}
The definitions are correct, i.e. the matrices do not depend on the choice of class representatives.
	
For any $p \in \mathbb{P}_1(\F)$, define $\F$-linear maps $\lambda_{p}, \mu_{p}: W \rightarrow A$ as $\lambda_{p}(w) = w \Lambda_{p}$ and $\mu_{p}(w) = w M_{p}$, for all vectors $w \in W$.
Let $\lambda, \mu \in \Hom_\F(W,A^n)$ be defined as $\lambda = (\lambda_p)_{p \in \mathbb{P}_1(\F)}$ and $\mu = (\mu_p)_{p \in \mathbb{P}_1(\F)}$. Note that the map $\lambda$ is injective.
Define a $\F$-linear code $C \subset A^n$ as the image $C = \Img \lambda$. Define a $\F$-linear map $f: C \rightarrow A^n$ as $f = \mu\lambda^{-1}$. The generator matrix of $C$ can be presented as a concatenation of the $\Lambda$-matrices and the generator matrix of the image $f(C)$ is a concatenation of the $M$-matrices.

Consider the irreducible polynomial $\bar{\chi}(x) = \chi(-x) = x^2 - \alpha x + \beta$ and
consider a finite field extension $\mathbb{F}_{q^2}$ of $\F$, $\mathbb{F}_{q^2} = \F[x] / (\bar{\chi}(x))$. Identify $A$ and $\mathbb{F}_{q^2}$ by introducing the basis $1, \omega$ in the last, where $\bar{\chi}(\omega) = 0$.
\begin{proposition}\label{thm:properties-of-vs-code}
The code $C$ is an $\mathbb{F}_{q^2}$-linear $[q+1,2]_{\mathbb{F}_{q^2}}$ MDS code.
The map $f$ is an $\F$-linear $\swc_\e$-automorphism of $C$ that does not extend to a $\GL_{2}(\F)$-monomial map.
\end{proposition}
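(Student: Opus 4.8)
The plan is to pass to the natural $\mathbb{F}_{q^2}$-coordinates on $A$ and $W$, in which all three assertions become transparent. First I would record the factorization hidden in $\Lambda_{[a:b]}$: up to the scalar $1/Q(a,b)$ its top two rows equal $\binom{-b}{a}(a\ \ b)$ and its bottom two rows equal $\binom{-b}{a}(-\beta b\ \ a+\alpha b)$. Under the identification $A=\mathbb{F}_{q^2}$ the row vectors $(a,b)$ and $(-\beta b,a+\alpha b)$ become $\pi_p:=a+b\omega$ and $\omega\pi_p$, and $Q(a,b)=\pi_p\pi_p^q=N(\pi_p)$ where $N(z)=z^{q+1}$. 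Writing $v=w_1+w_3\omega$ and $u=w_2+w_4\omega$, a short computation then gives the clean formula
\[ \lambda_p(w)=\frac{au-bv}{\pi_p^{\,q}}. \]
Since $a,b\in\F$, the numerator is $\mathbb{F}_{q^2}$-linear in $(v,u)$ and the denominator is a nonzero constant, so each $\lambda_p$, hence $\lambda$, is $\mathbb{F}_{q^2}$-linear for the $\mathbb{F}_{q^2}$-structure $w\mapsto(v,u)$ on $W$; as $\lambda$ is injective this exhibits $C=\Img\lambda$ as a $2$-dimensional $\mathbb{F}_{q^2}$-subspace. For the MDS claim I would read off the $\mathbb{F}_{q^2}$-generator matrix of $C$: its $p$-th column is $\frac{1}{\pi_p^{\,q}}\binom{-b}{a}$. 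The vectors $\binom{-b}{a}$ attached to distinct points of $\mathbb{P}_1(\F)$ are pairwise $\F$-linearly independent, hence pairwise $\mathbb{F}_{q^2}$-linearly independent, and the scalars are nonzero, so every $2\times2$ minor is nonzero and $C$ is $[q+1,2]_{\mathbb{F}_{q^2}}$ MDS.

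Comparing rows shows that $M_p$ is $\Lambda_p$ with rows $2$ and $3$ interchanged, so $\mu_p=\lambda_p\circ\sigma$, where $\sigma$ swaps the coordinates $w_2,w_3$ of $W$. Hence $\Img\mu=\Img\lambda=C$ and $f=\mu\lambda^{-1}=\lambda\sigma\lambda^{-1}$ is an $\F$-linear automorphism of $C$. The substantive point, and the main obstacle, is that $f$ is an $\swc_\e$-isometry, i.e. that for every $w$ the tuples $(\lambda_p(w))_p$ and $(\mu_p(w))_p$ agree as multisets. Here I would exploit the norm-one subgroup: with $\delta=\omega^q-\omega$, rearranging the formula above yields
\[ \lambda_p(w)=\tfrac{1}{\delta}\bigl(rV-U\bigr),\qquad V=v+\omega^q u,\quad U=v+\omega u,\quad r=\pi_p^{\,1-q}, \]
and as $p$ runs over $\mathbb{P}_1(\F)$ the element $r$ runs bijectively over the group $\mu_{q+1}$ of norm-one elements of $\mathbb{F}_{q^2}$. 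The same computation applied to $\sigma w$ shows $\mu_p(w)=\tfrac{1}{\delta}(rV^q-U)$, with the same $U$ and with $V$ replaced by its conjugate $V^q$. Since $V^{q-1}\in\mu_{q+1}$, the elements $V$ and $V^q$ lie in one $\mu_{q+1}$-coset, whence $\{rV:r\in\mu_{q+1}\}=\{rV^q:r\in\mu_{q+1}\}$; translating by $-U$ and scaling by $1/\delta$ gives the multiset equality (the case $V=0$ being immediate). By \Cref{thm:isometry-criterium-vs} applied to $G=\e$ this is exactly the statement that $f$ is an $\swc_\e$-isometry.

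Finally, for non-extendability I would invoke \Cref{thm:isometry-criterium-vs} with $G=\GL(A)=\GL_2(\F)$, whose only orbits are $\{0\}$ and $A\setminus\{0\}$: if $f$ extended to a $\GL_2(\F)$-monomial map, there would be a permutation $\pi$ of $\mathbb{P}_1(\F)$ with $\Ker\lambda_p=\Ker\mu_{\pi(p)}$ for all $p$. I would then compute the kernels in $(v,u)$-coordinates. The kernel $\Ker\lambda_p=\mathbb{F}_{q^2}\cdot(a,b)$ is a $1$-dimensional $\mathbb{F}_{q^2}$-subspace, being the kernel of an $\mathbb{F}_{q^2}$-linear map; but $\Ker\mu_p=\{(v,u):v,u\in\F\pi_p\}$ is not $\mathbb{F}_{q^2}$-stable, since $(\pi_p,0)$ lies in it while $\omega\cdot(\pi_p,0)=(\omega\pi_p,0)$ does not, precisely because $Q(a,b)\neq0$. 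Thus no $\Ker\lambda_p$ can equal any $\Ker\mu_{p'}$, the required permutation cannot exist, and $f$ does not extend to a $\GL_2(\F)$-monomial map.

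The only genuinely delicate step is the multiset identity proving the isometry property, which is where the norm-one group enters; once the $\mathbb{F}_{q^2}$-coordinates are in place, the MDS property, the automorphism property, and the kernel obstruction are all short bookkeeping.
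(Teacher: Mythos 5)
Your proof is correct, and it takes a genuinely different, more self-contained route than the paper's. The paper proves the isometry property by computing the orbit preimages explicitly --- $\lambda_{[a:b]}^{-1}(\{(x,y)\})=(-\alpha x-\beta y,x,x,y)+V_{[a:b]}$ and likewise for $\mu_{[a:b]}$ with $U_{[a:b]}$ --- and then reducing \cref{eq-main-space-equation} to the indicator identity $\sum_{p}\id_{V_p}=\sum_{p}\id_{U_p}$, which it imports from \cite{d-geom}; it likewise imports the MDS property from \cite{d3}, and it only \emph{asserts} that \cref{eq-condition} fails at $0\in A$. You replace all three imports with in-line arguments built on the parametrization $\lambda_p(w)=\frac{1}{\delta}(rV-U)$, $\mu_p(w)=\frac{1}{\delta}(rV^q-U)$ with $r=\pi_p^{1-q}$: the isometry property becomes the statement that the norm-one group absorbs the Frobenius twist $V\mapsto V^q$ (valid because $z\mapsto z^{1-q}$ on $\mathbb{F}_{q^2}^*$ has kernel exactly $\F^*$, so $p\mapsto r$ is a bijection of $\mathbb{P}_1(\F)$ onto the norm-one subgroup, and $V^{q-1}$ has norm $1$); the MDS property follows from the column form $\frac{1}{\pi_p^q}\binom{-b}{a}$, since two $\F$-independent vectors with entries in $\F$ remain $\mathbb{F}_{q^2}$-independent (a proportionality factor would lie in $\F$); and unextendability is \emph{proved} rather than asserted, via the clean dichotomy that every $\Ker\lambda_p$ is an $\mathbb{F}_{q^2}$-line in the $(v,u)$-coordinates while no $\Ker\mu_{p'}=\F\pi_{p'}\times\F\pi_{p'}$ is $\mathbb{F}_{q^2}$-stable, which rules out all permutations at once --- exactly what \cref{eq-condition} at the orbit $\{0\}$ requires, by the second part of \Cref{thm:isometry-criterium-vs}. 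I checked the computations: the row factorization of $\Lambda_{[a:b]}$, the identity $Q(a,b)=N(\pi_p)$ via $\omega+\omega^q=\alpha$ and $\omega\omega^q=\beta$, the relations $U(\sigma w)=U(w)$ and $V(\sigma w)=V(w)^q$ (using $\omega^{q^2}=\omega$), and $\Ker\mu_p=U_{[a:b]}$ all hold, and your kernels agree with the paper's $V_{[a:b]}$, $U_{[a:b]}$. What each approach buys: the paper's version is shorter given its companion papers and keeps the geometric picture of two systems of planes covering $W$, which it reuses for the subcode analysis in \Cref{sec:additional-properties}; yours exposes the algebraic mechanism ($f$ is $\lambda$-conjugate to a coordinatewise Frobenius, invisible to singleton orbits because norm-one multiples permute the coordinate set) and makes the proposition independent of \cite{d-geom} and \cite{d3}. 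Two cosmetic points: the reason $(\omega\pi_p,0)\notin\Ker\mu_p$ is that $\omega\notin\F$ --- the condition $Q(a,b)\neq 0$ only guarantees $\pi_p\neq 0$, which already follows from $1,\omega$ being a basis --- and your symbol for the norm-one group clashes with the maps $\mu_p$, so rename it.
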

\begin{proof}
Prove that the map $f$ is an $\swc_\e$-isometry.
For any $(x,y) \in A$, ${[a:b]} \in \mathbb{P}_1(\F)$ denote
\begin{align*}
V_{[a:b]} &= \langle (a, b, 0, 0), (0,0,a,b) \rangle_\F\;,\\
U_{[a:b]} &= \langle (a, 0, b, 0), (0,a,0,b) \rangle_\F\;,
\end{align*} and check that,
\begin{align*}
\lambda_{[a:b]}^{-1}(\{(x,y)\}) &= (-\alpha x - \beta y, x, x, y) + V_{[a:b]}\;,\\
\mu_{[a:b]}^{-1}(\{(x,y)\}) &= (-\alpha x - \beta y, x, x, y) + U_{[a:b]}\;.
\end{align*}
Indeed, for $\lambda_{[a:b]}$ calculate,
\begin{align*}
&(a,b,0,0) \Lambda_{[a:b]} = \frac{1}{Q(a,b)} (-ba^2 + ba^2, -a b^2 + a b^2) = (0,0)\;,\\
&(0,0,a,b) \Lambda_{[a:b]} = \frac{1}{Q(a,b)} (\beta a b^2- \beta a^2 b, -\alpha ab^2 - ba^2 + ba^2 + \alpha a b^2) = (0,0)\;,\\
&(-\alpha ,1,1,0) \Lambda_{[a:b]} = \frac{1}{Q(a,b)} (\alpha ba + a^2 + \beta b^2,
\alpha b^2  + ab - \alpha b^2 - a b) = (1,0)\;,\\
&(-\beta,0,0,1) \Lambda_{[a:b]} = \frac{1}{Q(a,b)} (\beta b a - \beta a b,
\beta b^2 + a^2 + \alpha a b) = (0,1)\;.
\end{align*}
In the same way make calculations for the maps $\mu_{[a:b]}$.

In \cite{d-geom} we proved that $\sum_{p \in \mathbb{P}_1(\F)} \id_{V_{p}} = \sum_{p \in \mathbb{P}_1(\F)} \id_{U_{p}}$. Also, for any $(x,y) \in A$, there is the equality in \cref{eq-main-space-equation},
\begin{equation*}
\sum_{[a:b] \in \mathbb{P}_1(\F)} \id_{(-\alpha x - \beta y, x, x, y) +  V_{[a:b]}} = \sum_{[a:b] \in \mathbb{P}_1(\F)} \id_{(-\alpha x - \beta y, x, x, y) + U_{[a:b]}}\;.
\end{equation*}
Hence, by \Cref{thm:isometry-criterium-vs}, $f$ is an $\swc_{\e}$-isometry. However, $f$ does not extends even to a $\GL(A)$-monomial map since condition (\ref{eq-condition}) does not hold for $0 \in A$. 

Prove that $f$ is an automorphism of $C$. Denote $\vec{v}_i = \lambda(\vec{e}_i)$, $i \in \{1,2,3,4\}$, where $\vec{e}_i \in W$ is the vector with 1 on the $i$th position and zeros elsewhere. The map $f$ fixes $\vec{v}_1$ and $\vec{v}_4$ and flips $\vec{v}_2$ and $\vec{v}_3$, hence $f(C) = C$.

The code $C$ is a $\F$-linear code in $\mathbb{F}_{q^2}^n$. Note that $\vec{v}_3 = \omega \vec{v}_1$ and $\vec{v}_4 = \omega \vec{v}_2$. Indeed, for any ${[a:b]} \in \mathbb{P}_1(\F)$,
\begin{align*}
(-ba - b^2 \omega)\omega &= -ba \omega - b^2 (\alpha \omega - \beta) = \beta b^2 - (\alpha b^2 + ab) \omega\;,\\
(a^2 + ab \omega)\omega &= a^2 \omega +ab (\alpha \omega - \beta) = -\beta ab - (a^2 + \alpha ab) \omega\;.
\end{align*}
So, $C$ is an $\mathbb{F}_{q^2}$-linear code. However, $f: C \rightarrow C$ is not an $\mathbb{F}_{q^2}$-linear map.

Considering $C$ as a code in a Hamming space $\mathbb{F}_{q^2}^n$, it is equivalent, with a $\GL_2(\F)$-monomial equivalence, to a code observed in \cite{d3}, where we proved that $C$ is a $[q + 1, 2]_{\mathbb{F}_{q^2}}$ MDS code (for MDS codes see \cite[p.~319]{macwilliams}). 
\end{proof}

\subsection{Related properties}\label{sec:additional-properties}
Let us observe properties of subcodes of $C$. It is clear that any $\F$-linear $\swc_\e$-isometry of any one-dimensional code in $A^n$ is extendable. But, in general, it is not true for codes of dimension greater than 1. Note that the restrictions of $f$ on subcodes $\langle \vec{v}_1,\vec{v}_2, \vec{v}_3 \rangle_\F$ and $\langle \vec{v}_2, \vec{v}_3, \vec{v}_4 \rangle_\F$ are unextendable $\swc_\e$-automorphisms.

Consider a subcode $C' = \langle \vec{v}_1,\vec{v}_2 \rangle_\F \subset C$, and the restriction of $f$ on $C'$. The code $C'$ is constant-weight (with respect to the Hamming weight), for any $x \in C'\setminus\{0\}$, $\wt(x) = n - 1$ and it is the smallest $\F$-linear code (both with respect to the size and the length) for which there exists an unextendable $\swc_{\e}$-isometry, see \Cref{thm-bound-for-trivial-group} later in this paper.

For the code $C$ we see that 
\cref{eq-main-space-equation} calculated at any point $(x,y) \in A$ is just a shifted in $W$, by a vector $(-\alpha x - \beta y, x , x, y)$, \cref{eq-main-space-equation} calculated at $(0,0)$. 
For $C'$ this is not the case. Let $W' \subset W$ be a subspace generated by $\langle (1,0,0,0), (0,1,0,0) \rangle_\F$ and consider the $\F$-linear maps $\lambda, \mu : W' \rightarrow A^n$. Clearly, $C' = \lambda(W')$. Denote a line $L_{[a:b]} = \{ (x,y) \in W' \cong \mathbb{F}^2 \mid {[x:y]} = {[a:b]} \}$, for ${[a:b]} \in \mathbb{P}_1(\F)$, and calculate the preimages in $W'$,

\begin{equation*}
\lambda_{[a:b]}^{-1}((x,y)) = 
\left\lbrace\begin{array}{ll}
L_{[a:b]},& \text{ if } (x,y) = (0,0);\\
\emptyset,&\text{ if } [x:y] \neq [a:b];\\
(-\alpha x - \beta y, x) + L_{[a:b]},&\text{ if } [x:y] = [a:b].
\end{array} \right.
\end{equation*}
\begin{equation*}
\mu_{[a:b]}^{-1}((x,y)) =	\left\lbrace\begin{array}{ll}
(-\alpha x - \beta y - \frac{a}{b} x, x - \frac{a}{b}y), & \text{if } b \neq 0;\\
W', & \text{if } b = 0 \text{ and } (x,y) = 0;\\
\emptyset, & \text{if } b = 0 \text{ and } (x,y) \neq 0.
\end{array} \right.
\end{equation*}
Note that if ${[x:y]} = {[a:b]}$, then
\begin{align*}
\lambda_{[a:b]}^{-1}((x,y)) 
= \{ \mu_{[a:b]}^{-1} ((x,y)) \mid {[a:b]} \neq {[1:0]} \}\;.
\end{align*}
Therefore, for any nonzero $(x,y) \in A$, \cref{eq-condition} calculated for the orbit $\{(x,y)\}$ becomes,
\begin{equation*}
\sum_{{[1:0]}\neq [a:b] \in \mathbb{P}_1(\F)} \id_{\mu^{-1}_{[a:b]}(\{(x,y)\})} + \id_{\emptyset} = \id_{\lambda^{-1}_{[x:y]}(\{(x,y)\})} + q\id_{\emptyset}\;,
\end{equation*}
which is an equality, as a covering of a shifted line by points. For the orbit $\{(0,0)\}$ we have,
\begin{equation*}
q\id_{ (0,0) } + \id_{W'} = \sum_{[a:b] \in \mathbb{P}_1(\F)} \id_{L_{[a:b]}}\;,
\end{equation*}
which is an equality, as a covering of a two dimensional plane by lines. The $\swc_\e$-isometry $f$ is not extendable to a $\GL(A)$-monomial map because condition (\ref{eq-condition}) does not hold.

To illustrate the constructions we give an example for a finite field $\mathbb{F}_2$.
\begin{example}\label{ex:2}
	Let $q = 2$. Consider a quadratic non-degenerate form $Q: \mathbb{F}_2 \times \mathbb{F}_2 \rightarrow \mathbb{F}_2$, $Q(a,b) = a^2 + ab + b^2$, where $a,b \in \mathbb{F}_2$. Consider a projective line $\mathbb{P}_1(\mathbb{F}_2) = \{ {[0:1]}, {[1:0]}, {[1:1]} \}$. Write down the matrix $\Lambda_{[a:b]}$, for each $[a:b] \in \mathbb{P}_1(\mathbb{F}_2)$,
\begin{align*}
\Lambda_{[0:1]} =
\left(\begin{array}{cc}
0 & 1\\
0 & 0\\
1 & 1\\
0 & 0
\end{array}\right)\;,\;
\Lambda_{[1:1]} =
\left(\begin{array}{cc}
1 & 1\\
1 & 1\\
1 & 0\\
1 & 0
\end{array}\right)\;,\;
\Lambda_{[1:0]} =
\left(\begin{array}{cc}
0 & 0\\
1 & 0\\
0 & 0\\
0 & 1
\end{array}\right)\;.
\end{align*}
Define an $\mathbb{F}_2$-linear code $C \subseteq (\mathbb{F}_2^2)^3$ as an $\mathbb{F}_2$-span of rows in the left matrix,
\begin{equation*}
\left(\begin{array}{ccc}
0 1 & 1 1 & 0 0\\
0 0 & 1 1 & 1 0\\
1 1 & 1 0 & 0 0\\
0 0 & 1 0 & 0 1
\end{array}\right) \xrightarrow{f} 
\left(\begin{array}{ccc}
0 1 & 1 1 & 0 0\\
1 1 & 1 0 & 0 0\\
0 0 & 1 1 & 1 0\\
0 0 & 1 0 & 0 1
\end{array}\right)\;,
\end{equation*}
and define an $\mathbb{F}_2$-linear map $f: C \rightarrow (\mathbb{F}_2^2)^3$ in the same way as described in \Cref{sec:main-construction}.
Easy to check that $f$ is an unextendable $\swc_\e$-isometry.
\end{example}

\subsection{General result for vector space alphabets}
We proved in \cite{d1} several facts about $\F$-linear codes over a vector space alphabet $A \cong \mathbb{F}_q^\ell$. More precisely, combining the ideas of Wood and Dinh, we showed that a code, for which the extension property fails to hold, with respect to the Hamming weight, should have its length to be greater than $q$. 
\begin{proposition}\label{thm-unextnedability}
Let $A$ be an $\F$-linear vector space of dimension $\ell$ greater than 1.
Let $G$ be a subgroup of $\GL(A)$.
For any $n > q$ there exist a code $C \subset A^n$ and an $\F$-linear $\swc_G$-automorphism $f: C \rightarrow C$ that does not extend to a $\GL(A)$-monomial map.
\end{proposition}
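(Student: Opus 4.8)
The plan is to reduce the general statement to the explicit construction of \Cref{sec:main-construction}, which already produces, for $\ell = 2$, the trivial group, and length $q+1$, a code and an unextendable $\swc_\e$-automorphism. I would treat the three generalizations — arbitrary $\ell > 1$, arbitrary $G$, and arbitrary $n > q$ — separately, since each is essentially a bookkeeping reduction to that core case.

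First, the reduction in $G$. I would observe that the singleton orbits of $\e$ refine every orbit of $G$: for any $O \in A/G$ and any $\F$-linear map $h \colon W \to A$ one has $h^{-1}(O) = \bigsqcup_{x \in O} h^{-1}(\{x\})$, so $\id_{h^{-1}(O)} = \sum_{x \in O} \id_{h^{-1}(\{x\})}$. Consequently, any pair of maps satisfying \cref{eq-main-space-equation} for every singleton orbit of $\e$ satisfies it for every orbit of $G$, so by \Cref{thm:isometry-criterium-vs} an $\swc_\e$-isometry is automatically an $\swc_G$-isometry; it therefore suffices to build an $\swc_\e$-isometry. Dually, non-extendability to a $\GL(A)$-monomial map is the strongest such statement, since $\overline{\GL(A)} = \GL(A)$ is the largest closure. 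As $A/\GL(A) = \{\{0\}, A \setminus \{0\}\}$, \cref{eq-condition} for $\GL(A)$ collapses at both orbits to the single requirement $\Ker \lambda_i = \Ker \mu_{\pi(i)}$ for all $i$; hence, again by \Cref{thm:isometry-criterium-vs}, $f$ fails to extend to a $\GL(A)$-monomial map as soon as the multisets $\{\Ker \lambda_i\}_i$ and $\{\Ker \mu_i\}_i$ differ.

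Second, the reduction in $\ell$. I would fix a two-dimensional subspace $A_0 \subseteq A$ and run the construction of \Cref{sec:main-construction}, regarding its maps $\lambda_p, \mu_p$ as maps $W \to A_0 \hookrightarrow A$. Kernels are computed in $W$ and are unaffected by the ambient space, so the multiset obstruction established in the proof of \Cref{thm:properties-of-vs-code} persists. For \cref{eq-main-space-equation} at a singleton $\{x\}$ with $x \notin A_0$ both sides vanish, while for $x \in A_0$ the equation is the one already verified there; hence the embedded maps still define an $\swc_\e$-isometry. Third, the reduction in $n$: writing $n = (q+1) + k$ with $k \geq 0$, I would append $k$ zero coordinates, setting $\lambda = (\lambda^{(0)}, 0, \dots, 0)$ and $\mu = (\mu^{(0)}, 0, \dots, 0)$ for the length-$(q+1)$ pair $(\lambda^{(0)}, \mu^{(0)})$. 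The structural fact I would exploit is that the core $f$ fixes $\vec{v}_1, \vec{v}_4$ and swaps $\vec{v}_2, \vec{v}_3$, so $\mu^{(0)} = \lambda^{(0)} \phi_0$ for the automorphism $\phi_0 \in \GL(W)$ interchanging $\vec{e}_2$ and $\vec{e}_3$; since $0 = 0 \circ \phi_0$, the appended coordinates preserve $\mu = \lambda \phi_0$, giving $\Img \mu = \Img \lambda =: C$ and thus an automorphism $f = \mu\lambda^{-1}$ of $C$. The appended coordinates contribute $\id_W$ to both sides of \cref{eq-main-space-equation} at $\{0\}$ and nothing elsewhere, leaving the isometry property intact, while adding $k$ copies of the four-dimensional $W$ to each kernel multiset, which cannot coincide with the two-dimensional $V_p$ and $U_p$, so the obstruction survives verbatim.

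Combining the three reductions yields, for every $\ell > 1$, every $G \leq \GL(A)$, and every $n > q$, a code $C \subset A^n$ and an $\swc_G$-automorphism $f$ of $C$ that does not extend to a $\GL(A)$-monomial map. I expect the only genuinely delicate point to be the non-extendability bookkeeping: one must ensure that passing to the larger group $\GL(A)$ and the larger space $A$ does not accidentally create a matching permutation $\pi$, which is exactly why reducing \cref{eq-condition} to the kernel multisets — and checking that the padded copies of $W$ are of the wrong dimension to interfere — is the crux.
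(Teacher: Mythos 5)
Your proposal is correct, and its skeleton is the paper's own: the length-$(q+1)$ core construction over a two-dimensional alphabet from \Cref{sec:main-construction}, followed by the same three reductions (embedding a two-dimensional subspace into $A$, padding with $n-q-1$ zero columns, and passing from $\e$ to arbitrary $G$ because the $\e$-orbits refine the $G$-orbits). The one genuine divergence is the mechanism for non-extendability to a $\GL(A)$-monomial map. You apply the necessity half of \Cref{thm:isometry-criterium-vs} with $G = \GL(A)$: since $A/\GL(A)$ has only the orbits $\{0\}$ and $A\setminus\{0\}$, \cref{eq-condition} collapses to $\Ker \lambda_i = \Ker \mu_{\pi(i)}$, and no permutation can match the kernel multisets because the padded kernels are copies of the four-dimensional $W$ while no two-dimensional $V_p$ equals any $U_{p'}$. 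That last distinctness you assert rather than verify; it does hold, but deserves the one-line check ($V_{[a:b]}$ contains both $(a,b,0,0)$ and $(0,0,a,b)$, and membership of the former in $U_{[a':b']}$ forces $b'=0$, which the latter then rules out) — though to be fair the paper's proof of \Cref{thm:properties-of-vs-code} makes the same unverified assertion for $0 \in A$. The paper's proof of the proposition itself takes a different tack here: it restricts $f$ to the subcode $C' = \langle \vec{v}_1, \vec{v}_2 \rangle_\F$ of \Cref{sec:additional-properties} and counts zero columns — $C'$ has none while $f(C')$ has one (at the coordinate $[1:0]$, where $\mu_{[1:0]}$ vanishes on $W'$), and the number of zero columns is invariant under $\GL(A)$-monomial maps, so $f|_{C'}$, and hence $f$, cannot extend. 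Your route is more direct, avoiding the subcode analysis entirely, at the cost of the kernel-distinctness verification; the paper's route trades that for an elementary monomial invariant and localizes the obstruction in the subcode $C'$ that it reuses elsewhere (notably around \Cref{thm-bound-for-trivial-group}). Your padding bookkeeping — the automorphism $\phi_0$ swapping $\vec{e}_2$ and $\vec{e}_3$ giving $\mu = \lambda\phi_0$ and hence $f(C)=C$, and the matching $\id_W$ contributions to both sides of \cref{eq-main-space-equation} at the zero orbit — is careful and correct, and is in substance what the paper's closing sentence asserts without detail.
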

\begin{proof}
	Let $B$ be a $\F$-linear vector space of dimension 2. 
	Consider a code $C \subset B^{q + 1}$ and an $\swc_{\e}$-automorphism $f: C \rightarrow C$ constructed in \Cref{sec:main-construction}. There exists an embedding $B \subseteq A$, so we can consider the code $C$ as a code in $A^{q + 1}$ and map $f$ as an $\swc_{\e}$-isometry $C \rightarrow C$, unextendable to a $\GL(B)$-monomial map.
	
	The map $f$ does not extends even to a $\GL(A)$-monomial map. Really, consider the restriction of $f$ on the subcode $C'$, observed in \Cref{sec:additional-properties}. Since the codes $C$ and $C'$ have different number of zero-columns, $f: C' \rightarrow A^{q+1}$ does not extend to a $\GL(A)$-monomial map and so does the lifted map $f: C \rightarrow A^{q+1}$.
	
	Adding $n - q - 1$ zero columns to $C$ and redefining $f$ as zero on these additional coordinates will not change the unextendability of the $\swc_\e$-isometry. Since $\e \leq G$, the map $f$ is an $\swc_G$-isometry.
\end{proof}

\subsection{Trivial group case}
For $\swc_\e$ we can improve \Cref{thm-unextnedability}.
\begin{lemma}\label{lemma:intersection-of-shifted}
	Let $U, V$ be two vector spaces in $W$ and let $x,y \in W$. Then, $\card{ x + U \cap y + V }$ equals or $0$ either $\card{U \cap V}$.
\end{lemma}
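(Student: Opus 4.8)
The plan is to analyze the affine intersection $(x+U) \cap (y+V)$ directly, splitting into the two cases according to whether this set is empty or not. This is an elementary fact about cosets of subspaces in a vector space, so I would aim for a short, clean argument rather than anything requiring the heavier machinery of the preceding sections.

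First I would dispose of the empty case: if $(x+U) \cap (y+V) = \emptyset$, then the cardinality is $0$ and we are done, matching the first alternative in the statement. So assume the intersection is nonempty and fix some element $z \in (x+U) \cap (y+V)$. The key observation is that once a single common point $z$ exists, the whole intersection is the coset $z + (U \cap V)$. To see this, suppose $w \in (x+U) \cap (y+V)$; then $w - z \in U$ (since both $w$ and $z$ lie in the coset $x+U = z+U$) and likewise $w - z \in V$, so $w - z \in U \cap V$, giving $w \in z + (U \cap V)$. Conversely, any $z + s$ with $s \in U \cap V$ lies in both $x + U = z + U$ and $y + V = z + V$. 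Hence $(x+U) \cap (y+V) = z + (U \cap V)$.

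Finally, since translation by $z$ is a bijection of $W$, the coset $z + (U \cap V)$ has the same cardinality as $U \cap V$ itself, namely $\card{U \cap V}$. This establishes the second alternative and completes the dichotomy.

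I do not expect any serious obstacle here: the whole content is the standard fact that two affine subspaces either fail to meet or meet in an affine subspace parallel to the intersection of their directions. The only point requiring mild care is the bookkeeping in the nonempty case, specifically justifying the equalities $x + U = z + U$ and $y + V = z + V$ that let one recenter both cosets at the common point $z$; this follows immediately from $z \in x + U$ and $z \in y + V$, respectively. No appeal to \Cref{thm:isometry-criterium-vs} or the orbit structure is needed, as the statement is purely about the combinatorics of cosets.
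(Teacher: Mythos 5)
Your proof is correct and follows essentially the same route as the paper: dispose of the empty case, then recenter both cosets at a common point $z$ (the paper shifts both sets by $-w$ for a common point $w$, which is the same translation argument) to reduce to $\card{U \cap V}$. Your explicit identification of the intersection as the coset $z + (U \cap V)$ is a slightly more detailed write-up of the same idea, with no substantive difference.
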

\begin{proof}
	If $x + U \cap y + V = \emptyset$, then we are done. Assume that there exists $w \in x + U \cap y + V$. Therefore there exist $u \in U$ and $v \in V$ such that $w = x + u = y + v$. Shift the sets $x + U$ and $y + U$ by the vector $-w$. Note that the size of the intersection of the shifted sets remains unchanged and calculate $\card{ -w + x + U \cap -w + y + V} = \card{U \cap V}$.
\end{proof}

\begin{proposition}\label{thm-bound-for-trivial-group}
	Let $A$ be an $\F$-linear vector space of dimension $\ell$ greater than 1. Let $n \leq q$ and let $C \subseteq A^n$ be an $\F$-linear code.
	Each $\F$-linear $\swc_\e$-isometry $f: C \rightarrow A^n$ extends to a $\e$-monomial map.
\end{proposition}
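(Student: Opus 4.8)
The plan is to reduce the statement, via \Cref{thm:isometry-criterium-vs}, to a purely combinatorial fact about multisets of linear maps, and then to settle that fact by a covering argument over $W$ in which the hypothesis $n \le q$ enters essentially. First I would rephrase the isometry condition pointwise. By \Cref{thm:isometry-criterium-vs}, $f$ being an $\swc_\e$-isometry is equivalent to \eqref{eq-main-space-equation} for every orbit, and for the trivial group the orbits are the singletons $\{x\}$, $x \in A$. Evaluating the identity of indicators $\sum_i \id_{\lambda_i^{-1}(\{x\})} = \sum_i \id_{\mu_i^{-1}(\{x\})}$ at a vector $w \in W$ shows that, for every $w$, the multiset of coordinate values $\{\lambda_1(w),\dots,\lambda_n(w)\}$ equals $\{\mu_1(w),\dots,\mu_n(w)\}$ in $A$, and conversely this is exactly \eqref{eq-main-space-equation}. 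Since $\e$ is closed, $\bar\e = \e$ and an $\e$-monomial map is just a permutation of coordinates; so by the last part of \Cref{thm:isometry-criterium-vs} together with the $\e$-pseudo-injectivity of $A$ proved above, it suffices to produce a permutation $\pi \in S_n$ with $\lambda_i = \mu_{\pi(i)}$ for all $i$, i.e.\ to show that the multisets of maps $\{\lambda_1,\dots,\lambda_n\}$ and $\{\mu_1,\dots,\mu_n\}$ in $\Hom_\F(W,A)$ coincide.

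Second, I would prove this multiset equality by contradiction. Cancel the maps common to both multisets, obtaining subfamilies $\{\lambda_i\}_{i\in I}$ and $\{\mu_j\}_{j\in J}$ that share no map and have $\card{I}=\card{J}=m\le n\le q$; since cancellation removes equal value-submultisets from each side, the pointwise equality $\{\lambda_i(w)\}_{i\in I}=\{\mu_j(w)\}_{j\in J}$ survives for every $w$. Suppose $m\ge 1$ and fix $\lambda:=\lambda_{i_0}$, $i_0\in I$. For each $j\in J$ the map $\lambda-\mu_j$ is nonzero (the families are disjoint), so $K_j:=\Ker(\lambda-\mu_j)$ is a proper subspace of $W$ (in particular $\dim W\ge 1$). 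The key estimate is that these $m\le q$ proper subspaces through the origin cannot cover $W$: each has at most $q^{\dim W-1}$ elements and all contain $0$, so $\card{\bigcup_{j\in J}K_j}\le 1+q(q^{\dim W-1}-1)=q^{\dim W}-(q-1)<\card{W}$. Hence there is $w$ with $\lambda(w)\neq\mu_j(w)$ for all $j\in J$. At this $w$ the value $c:=\lambda(w)$ occurs zero times among $\{\mu_j(w)\}_{j\in J}$ but at least once among $\{\lambda_i(w)\}_{i\in I}$ (from $\lambda$ itself), contradicting the pointwise equality. Therefore $m=0$ and the two multisets of maps agree.

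The \emph{main obstacle} is the covering estimate in the second step, and the point to get right is that one only has to separate the chosen $\lambda$ from the maps $\mu_j$ on the opposite side, rather than also control the remaining $\lambda_i$; this leaves exactly $m\le q$ proper subspaces to avoid, which is precisely where $n\le q$ is used, and the bound degrades as soon as $q$ is replaced by $q+1$, in agreement with the construction of \Cref{sec:main-construction}. Once the multisets coincide, the matching permutation $\pi$ gives \eqref{eq-condition}, and $\e$-pseudo-injectivity then upgrades this, through \Cref{thm:isometry-criterium-vs}, to an extension of $f$ to an $\e$-monomial map, completing the proof. (\Cref{lemma:intersection-of-shifted} yields an alternative, coset-based route to the same matching phrased directly in terms of the fibres $\lambda_i^{-1}(\{x\})$, but the pointwise reformulation used here avoids intersecting shifted subspaces altogether.)
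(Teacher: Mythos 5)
Your proof is correct, but it takes a genuinely different route from the paper's. The paper's proof leans on two external results from \cite{d1}: that for $n \le q$ every Hamming isometry of an $\F$-linear code over $A$ extends, and that an extendable Hamming isometry yields a permutation with $\Ker \lambda_i = \Ker \mu_{\pi(i)}$. Assuming then that \cref{eq-condition} fails at some singleton orbit $\{x\} \neq \{0\}$, the paper derives $n > q$ by counting inside the single fibre $\mu_1^{-1}(x)$, a coset of $\Ker \mu_1$: a nontrivial cover of it by the sets $\lambda_i^{-1}(x) \cap \mu_1^{-1}(x)$, each of size at most $\card{\Ker \mu_1}/q$ by \Cref{lemma:intersection-of-shifted} and the hyperplane bound, needs at least $q$ indices, and the index $1$ is excluded because $\Ker \lambda_1 = \Ker \mu_1$ makes the two fibres cosets of the same subspace, hence equal or disjoint, so $n \ge q+1$. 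You instead prove outright that the multisets of coordinate maps $\{\lambda_1,\dots,\lambda_n\}$ and $\{\mu_1,\dots,\mu_n\}$ in $\Hom_\F(W,A)$ coincide --- which for $G = \e$ is precisely \cref{eq-condition} --- by cancelling common maps and applying the classical fact that a nonzero vector space over $\F$ is not the union of $q$ proper subspaces to the difference kernels $\Ker(\lambda - \mu_j)$; the pointwise multiset reformulation of \cref{eq-main-space-equation} for singleton orbits, the closedness of $\e$, and the final upgrade to an extension via $\e$-pseudo-injectivity and \Cref{thm:isometry-criterium-vs} are all sound. What each approach buys: yours is entirely self-contained (no appeal to \cite{d1}, no reduction to the Hamming case, no shifted-intersection lemma) and makes the sharpness of the bound transparent, since $q+1$ proper subspaces can cover $W$ (the lines of a plane), exactly the degeneration exploited by the length-$(q+1)$ construction of \Cref{sec:main-construction}; the paper's argument is shorter given the machinery of \cite{d1} and localizes the obstruction at a single orbit. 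Both ultimately rest on the same counting phenomenon --- a covering by pieces of relative size $1/q$ requires at least $q$ of them --- but you run it over all of $W$ via difference kernels, while the paper runs it inside one fibre coset.
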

\begin{proof}
	Let $f$ be an unextendable $\F$-linear $\swc_{\e}$-isometry of $C$. If $f$ is unextendable as a Hamming isometry, then, in \cite{d1} we proved that $n > q$. So in the proof we can assume that $f$ is an extendable Hamming isometry. In \cite{d1} we proved that in this case there exists a permutation $\pi \in S_n$ such that $\Ker \lambda_i = \Ker \mu_{\pi(i)}$, for $i \in \range{n}$. Making a corresponding reindexing, we can assume that the permutation $\pi$ is trivial.
	
	Use the second part of the statement in \Cref{thm:isometry-criterium-vs}, which implies the existence of an orbit $O = \{x\} \neq \{0\}$, such that \cref{eq-condition} does not hold but \cref{eq-main-space-equation} holds.
	
	For a $K$-linear map $\sigma:W \rightarrow A$, let us describe the structure of preimages $\sigma^{-1}(O)$. It is true that or $\sigma^{-1}(x) = w + \Ker \sigma$, for some point $w \in W$, such that $\sigma(w)=x$, either $\sigma^{-1}(x) = \emptyset$, if $x \not\in \Img \sigma$.
	
	Since \cref{eq-condition} does not hold, without loss of generality, suppose that $\mu^{-1}_1(x)$ is covered nontrivially by sets $\lambda^{-1}_i(x)$, $i \in I$, for some $I \subseteq \range{n}$.
	In other words, $\mu^{-1}_1(x) = \bigcup_{i \in I} \lambda_i^{-1}(x) \cap \mu^{-1}_1(x)$ and $\emptyset \subset \lambda_i^{-1}(x) \cap \mu^{-1}_1(x) \subset \mu_1^{-1}(x)$. 
	
	Using \Cref{lemma:intersection-of-shifted}, for any $i \in I$, $\card{\lambda_i^{-1}(x) \cap \mu^{-1}_1(x)} \leq \card{\Ker \lambda_i \cap \Ker \mu_1}$. 
	Calculate the size of $\Ker \mu_1$, using the fact that $\Ker \lambda_i \cap \Ker \mu_1$, $i \in I$, and $\Ker \mu_1$ can intersect each other at most by a hyperplane of $\Ker \mu_1$,
	\begin{equation*}
	\card{\Ker\mu_1} = \card{\mu^{-1}_1(x)} \leq \sum_{i \in I} \card{\lambda^{-1}_i(x) \cap \mu^{-1}_1(x)} \leq \card{I} \frac{\card{\Ker\mu_1}}{q}\;,
	\end{equation*}
	and hence $\card{I} \geq q$. From our assumption, $\Ker \mu_1 = \Ker \lambda_1$ and thus $1 \not\in I$. Therefore, $n = \card{\range{n}} \geq \card{I \cup \{1\}} > q$. 
\end{proof}

\section{Pseudo-injectivity of vector spaces}
In this section we characterize the property of $G$-pseudo-injectivity for vector spaces in order to construct unextendable $\swc_G$-isometries of codes of length $1$.
\subsection{Poset of partitions into orbits}
\begin{definition}
A partition $P_1$ of the set $S$ is said to be \emph{finer} than a partition $P_2$ of the same set $S$, denoted $P_1 \leq P_2$, if any set in $P_2$ is a disjoint union of sets from $P_1$.
\end{definition}
The set of all possible partitions of the set $A$ with the relation $\leq$ is a poset.

\begin{example}
	Let $S = \{1, 2, 3\}$. Then for three partitions $P_1 = (1,2,3)$, $P_2 = (1,2)(3)$ and $P_3 = (1)(2)(3)$ the following holds,
	$$ P_3 \leq P_2 \leq P_1 \;.$$
\end{example}

Let $S$ be a set and let $G$ be a finite group acting on $S$ (from the right). The set of orbits of the action, denoted as $S/G$, induces a partition $P_G$ of the set $S$ into a union of disjoint subsets. Let $H$ be a subgroup of $G$. The subgroup $H$ inherits an action on $S$ from $G$. Obviously, $P_H \leq P_G$.

Denote by $\mathcal{P}$ the poset $(\{P_H \mid H \leq G\}, \leq)$.
Define the union of the partitions $P_{H_1}$ and $P_{H_2}$, denoted $P_{H_1} \cup P_{H_2}$, as the smallest element in $\mathcal{P}$ that contains both $P_{H_1}$ and $P_{H_2}$. Such an element exists since $P_G$ is the greatest element in $\mathcal{P}$.

\begin{proposition}\label{thm:poset-closed}
	Let $H = \langle h_1, \dots, h_n \rangle \leq G$. Then $P_H = \bigcup_{i=1}^n P_{\langle h_i \rangle}$.
\end{proposition}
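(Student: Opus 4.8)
The plan is to prove the two inequalities $\bigcup_{i=1}^n P_{\langle h_i \rangle} \leq P_H$ and $P_H \leq \bigcup_{i=1}^n P_{\langle h_i \rangle}$ separately; the first is essentially free. Since each $h_i$ lies in $H$, we have $\langle h_i \rangle \leq H$, and hence $P_{\langle h_i \rangle} \leq P_H$ by the monotonicity of $H \mapsto P_H$ noted just before the definition of $\mathcal{P}$. Thus $P_H \in \mathcal{P}$ is an upper bound of all the $P_{\langle h_i \rangle}$ in $\mathcal{P}$, and since $\bigcup_{i=1}^n P_{\langle h_i \rangle}$ is by definition the least such upper bound, we obtain $\bigcup_{i=1}^n P_{\langle h_i \rangle} \leq P_H$.

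For the reverse inequality, abbreviate $J = \bigcup_{i=1}^n P_{\langle h_i \rangle}$. Establishing $P_H \leq J$ amounts to checking that each block of $P_H$, i.e. each $H$-orbit, is contained in a single block of $J$. First I would fix $s \in S$ together with an element $t = s \cdot h$ in the same $H$-orbit, and write $h$ as a word in the generators, $h = h_{i_1}^{k_1} h_{i_2}^{k_2} \cdots h_{i_m}^{k_m}$ with $i_j \in \range{n}$ and $k_j \in \mathbb{Z}$ (finiteness of $G$ would even let one take the exponents positive, but this is not needed). Then I would form the telescoping chain of points $s_0 = s$ and $s_j = s \cdot h_{i_1}^{k_1} \cdots h_{i_j}^{k_j}$ for $1 \leq j \leq m$, so that $s_m = t$.

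The purpose of this chain is that consecutive points are easy to relate. For each $j$ we have $s_j = s_{j-1} \cdot h_{i_j}^{k_j}$, so $s_{j-1}$ and $s_j$ lie in the same $\langle h_{i_j} \rangle$-orbit, i.e. in the same block of $P_{\langle h_{i_j} \rangle}$. Since $J$ is an upper bound of $P_{\langle h_{i_j} \rangle}$ in $\mathcal{P}$, that is $P_{\langle h_{i_j} \rangle} \leq J$, the points $s_{j-1}$ and $s_j$ also share a block of $J$. Transitivity of the relation of sharing a block of $J$, applied along the chain $s_0, \dots, s_m$, then places $s$ and $t$ in the same block of $J$. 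As $s$ and $t$ formed an arbitrary pair in a common $H$-orbit, every $H$-orbit lies inside one block of $J$, which is precisely $P_H \leq J$. Combining the two inequalities yields $P_H = J$.

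I expect the only delicate point to be the bookkeeping around the definition of the join: one must use that $\bigcup$ is taken inside the restricted poset $\mathcal{P}$ rather than the full lattice of set-partitions. This is handled cleanly because $P_H$ itself belongs to $\mathcal{P}$ and serves as the relevant upper bound in the first inequality, while the chain argument for the second inequality uses only that $J$ is coarser than each $P_{\langle h_i \rangle}$, a property enjoyed by any upper bound of these partitions in $\mathcal{P}$. No finer information about the internal structure of $J$ is required.
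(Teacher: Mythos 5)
Your proof is correct, but it takes a genuinely different route from the paper's on the substantive inequality. Your first inequality, $\bigcup_{i=1}^n P_{\langle h_i \rangle} \leq P_H$, coincides with the paper's: $P_H$ is an upper bound in $\mathcal{P}$ of each $P_{\langle h_i \rangle}$, and the join is least. For the reverse inequality, however, the paper argues group-theoretically: it introduces the subgroup $J = \{ g \in G \mid g \text{ fixes every block of } P \}$, where $P$ is the join, observes that $J$ is closed with $P = P_J$, checks that each $h_i$ fixes every block of $P$ (since $P_{\langle h_i \rangle} \leq P$) so that $H \leq J$, and concludes $P_H \leq P_J = P$. You instead argue pointwise: writing $h \in H$ as a word in the generators and telescoping, consecutive points of your chain lie in a common $\langle h_{i_j} \rangle$-orbit, hence in a common block of the join, and transitivity of the block relation places the endpoints of any $H$-orbit pair in one block. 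Both arguments are sound and short; yours is more elementary and, notably, uses only that the join is \emph{some} partition coarser than every $P_{\langle h_i \rangle}$, so it in fact shows $P_H$ lies below every such upper bound — even in the full lattice of set partitions — which independently re-derives the existence of the join in $\mathcal{P}$ (it must equal $P_H$), a point the paper disposes of rather briskly by appealing to $P_G$ being the greatest element. What the paper's route buys in exchange is structural information in the spirit of its closure formalism: the join partition is realized as $P_J$ for a concrete \emph{closed} subgroup $J$, which dovetails with \Cref{thm:closed-subgroup} and with the later algorithm for enumerating closed subgroups of $\GL_3(\mathbb{F}_2)$, where having a subgroup realizing each partition is exactly what is exploited.
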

\begin{proof}
	Since $\langle h_i \rangle \leq H$, it is true that $P_{\langle h_i \rangle} \leq P_H$, for all $i \in \range{n}$, and thus $P = \bigcup_{i=1}^n P_{\langle h_i \rangle} \leq P_H$.
	
	Let $J$ be a subgroup of $G$ that contains all elements in $G$ that fix all orbits in $P$. Obviously, $J$ is closed and $P = P_J$. It is easy to see that $h_i \in J$, for all $i \in \range{n}$, that means $H \leq J$. Thereby, $P_H \leq P_J = P$.
\end{proof}

\begin{proposition}\label{thm:closed-subgroup}
	Let $H \leq G$ be a closed subgroup and let $J = \{ g \in H \mid g(x) = x \}$ for some $x \in S$. The subgroup $J$ is closed and $P_J \leq P_H$.
\end{proposition}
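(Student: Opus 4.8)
The plan is to establish the easy inequality first and then reduce closedness to showing $\bar{J} = J$, where the closure is taken inside $G$ exactly as in the proof of \Cref{thm:poset-closed}, namely $\bar{J} = \{ g \in G \mid g(O) = O \text{ for every orbit } O \in S/J\}$. The inequality $P_J \le P_H$ is immediate: since $J \le H$, every $J$-orbit is contained in a single $H$-orbit, so $P_J$ refines $P_H$, in the same way that $P_H \le P_G$ was observed to hold for $H \le G$ before the poset $\mathcal{P}$ was introduced.

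For closedness I would argue that $\bar{J} = J$. The inclusion $J \subseteq \bar{J}$ is automatic, because every element of a group fixes each of its own orbits setwise. The work is in the reverse inclusion: I would take $g \in \bar{J}$, i.e.\ an element of $G$ preserving every $J$-orbit, and verify the two defining conditions for membership in $J$, namely $g \in H$ and $g(x) = x$.

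The two key observations are as follows. First, since every element of $J$ fixes $x$, the $J$-orbit of $x$ is the singleton $\{x\}$; as $g$ preserves all $J$-orbits it must preserve $\{x\}$, which forces $g(x) = x$. Second, by $P_J \le P_H$ each $H$-orbit is a disjoint union of $J$-orbits, and since $g$ fixes each such $J$-orbit setwise it also fixes their union; hence $g$ preserves every $H$-orbit. Because $H$ is closed, this means $g \in \bar{H} = H$. Combining the two conclusions gives $g \in H$ with $g(x) = x$, that is $g \in J$, which completes $\bar{J} \subseteq J$ and hence $\bar{J} = J$.

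I expect the only point requiring care to be the step passing from ``$g$ preserves the finer partition $P_J$'' to ``$g$ preserves the coarser partition $P_H$'': one must record the elementary fact that a permutation fixing each block of a partition setwise also fixes every union of blocks setwise, which is precisely what allows the closedness of $H$ to be invoked. Everything else is direct bookkeeping with the definitions of the stabilizer $J$ and of the closure, so I do not anticipate a genuine obstacle beyond stating these observations cleanly.
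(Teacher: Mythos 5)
Your proposal is correct and follows essentially the same route as the paper: the paper's (very terse) proof likewise notes $P_J \leq P_H$ from $J \leq H$, takes $g \in G$ fixing all orbits in $P_J$, and concludes $g \in H$ and $g(x) = x$, hence $g \in J$. You have merely made explicit the two steps the paper leaves implicit --- that $\{x\}$ is itself a $J$-orbit, and that preserving all blocks of the finer partition $P_J$ implies preserving the coarser $P_H$ so that closedness of $H$ applies --- which is a faithful filling-in, not a different argument.
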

\begin{proof}
	The group $J$ is a subgroup of $H$, which means $P_J \leq P_H$. Let $g \in G$ be an element such that $g$ fixes all orbits in $P_J$. Then $g \in H$ and $g(x) = x$, so $g \in J$. Therefore $J = \bar{J}$. 
\end{proof}

\subsection{Spaces of dimension that differs from 3}
Let $V$ be an $\F$-linear vector space.
\begin{proposition}\label{thm:lt2}
	If $\dim_\F V \leq 2$, then for any subgroup $G \leq \GL(V)$ the space $V$ is $G$-pseudo-injective.
\end{proposition}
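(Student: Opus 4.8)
The plan is a case analysis on $\dim_\F V$ together with the dimension of the domain. Fix a subspace $U \subseteq V$ and an injective $\F$-linear map $f\colon U \to V$ preserving the orbits of $G$; the goal is to produce an element of $\bar{G}$ restricting to $f$ on $U$.

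First I would dispose of every case in which $\dim_\F U \leq 1$. This already covers $\dim_\F V \in \{0,1\}$ in their entirety, and it also covers the zero- and one-dimensional subspaces in the case $\dim_\F V = 2$. In all of these \Cref{lemma:a2} applies verbatim and yields an extension of $f$ to an element of $G$. It then remains only to observe that $G \subseteq \bar{G}$: for the right action each $g \in G$ sends an orbit $O$ to $Og = O$, so $g$ fixes every orbit and hence lies in $\bar{G}$ by the definition of the closure. Thus $f$ extends to an element of $\bar{G}$.

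The only remaining case is $\dim_\F U = 2$, which forces $\dim_\F V = 2$ and therefore $U = V$. Here $f\colon V \to V$ is an injective endomorphism of a finite-dimensional space, hence bijective, so $f \in \GL(V)$. The orbit-preservation hypothesis gives $f(O) \subseteq O$ for each orbit $O \in V/G$; since the orbits are finite and $f$ is a bijection, this forces $f(O) = O$. By the definition of $\bar{G}$ we conclude $f \in \bar{G}$, and $f$ is its own extension.

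I do not expect any genuine obstacle here: the entire content reduces to \Cref{lemma:a2} together with one elementary finiteness remark. The single point worth stating carefully is the top-dimensional case, where the injectivity of a self-map on a finite-dimensional space must be promoted to surjectivity, so that mere orbit-preservation can be upgraded to membership in $\bar{G}$.
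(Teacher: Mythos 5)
Your proof is correct and takes essentially the same route as the paper: the case $\dim_\F U \leq 1$ is handled by \Cref{lemma:a2} together with $G \leq \bar{G}$, and the case $U = V$ by the definition of the closure. If anything, you are slightly more careful than the paper's own proof, which in the case $U = V$ merely asserts that $f$ is an element of $G$, whereas your finiteness argument (injectivity forces bijectivity, and $f(O) \subseteq O$ forces $f(O) = O$) correctly places $f$ in $\bar{G}$.
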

\begin{proof}
	Let $U$ be a subspace of $V$ and let $f : U \rightarrow V$ be a map that preserves the orbits of $G$.
	If $U = V$, then $f$ is an element of $G$. If $U$ is a proper subspace, $\dim _\F U < 2$ and from \Cref{lemma:a2}, $f$ extends to an element of $G \leq \bar{G}$. Hence $V$ is $G$-pseudo-injective for any $G$.
\end{proof}

Let $m \geq 2$ be an integer and let $n = m^2$. Consider the following block-diagonal $n \times n$ matrix,
\begin{equation*}
T = \left(\begin{matrix}
M & 0 & \dots & 0\\
0 & M & \dots & 0\\
\vdots & \vdots & \ddots & \vdots\\
0 & 0 & \dots & M
\end{matrix}\right)\;,
\end{equation*}
consisting of $m$ blocks on the diagonal, where $M \in \GL_m(\F)$.
The matrix $T$ generates a multiplicative subgroup $\langle T \rangle < \GL_n(\F)$.

\begin{lemma}\label{lemma:a1}
	The group $\langle T \rangle$ is closed.
\end{lemma}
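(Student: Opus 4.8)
The plan is to establish the nontrivial inclusion $\overline{\langle T\rangle}\subseteq\langle T\rangle$ (the reverse being automatic) by taking an arbitrary $g\in\overline{\langle T\rangle}$ and probing it against three carefully chosen families of vectors. Throughout I regard $\F^n$ as a direct sum of $m$ blocks $\F^m$, writing a vector as $v=(v^{(1)},\dots,v^{(m)})$, so that $vT^k=(v^{(1)}M^k,\dots,v^{(m)}M^k)$. The defining property of the closure is that for every $v$ there is an integer $k=k(v)$ with $vg=vT^{k(v)}$, and the entire difficulty is that a priori $k$ depends on $v$; my goal is to force a single exponent valid on all of $\F^n$.

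First I would show $g$ is block diagonal. Feeding in a vector $v$ supported on a single block $i$ (that is, $v^{(j)}=0$ for $j\neq i$), the image $vg=vT^{k(v)}$ is again supported on block $i$; hence $g$ maps each of the $m$ coordinate blocks into itself and has the form $g=\mathrm{diag}(N_1,\dots,N_m)$ with $N_i\in\GL_m(\F)$. Notice that, restricting to block $i$, each $N_i$ preserves the orbits of $\langle M\rangle$ on $\F^m$, i.e. $N_i\in\overline{\langle M\rangle}$.

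Next I would show the blocks coincide and identify the common block as a power of $M$. Applying the closure property to the diagonal vectors $v=(u,u,\dots,u)$ gives, for each $u\in\F^m$, a single exponent $k$ with $uN_i=uM^k$ for all $i$ simultaneously; in particular $uN_1=\dots=uN_m$ for every $u$, whence $N_1=\dots=N_m=:N$. Then, testing $g$ against $v^{*}=(e_1,e_2,\dots,e_m)$, whose $i$-th block is the $i$-th standard basis vector of $\F^m$, the closure property yields one exponent $k^{*}$ with $e_iN=e_iM^{k^{*}}$ for all $i$. Since the $e_i$ exhaust a basis of $\F^m$, every row of $N$ agrees with the corresponding row of $M^{k^{*}}$, so $N=M^{k^{*}}$ and therefore $g=T^{k^{*}}\in\langle T\rangle$.

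The step I expect to be the real crux is the last one, and it is exactly where the hypothesis $n=m^2$ (namely, $m$ blocks each of dimension $m$) enters: having as many blocks as $\dim_\F\F^m$ is what lets me load the entire standard basis into $v^{*}$, one vector per block, so that a single exponent must reproduce all $m$ rows of $N$. One genuinely cannot argue block by block, because each $N_i$ alone need only lie in $\overline{\langle M\rangle}$, which may be strictly larger than $\langle M\rangle$; it is the coupling of the $m$ blocks through a common exponent, enforced by the diagonal and basis probes, that eliminates these extra automorphisms. The remaining verifications are routine once the three test families are in place.
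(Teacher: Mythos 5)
Your proof is correct and follows essentially the same route as the paper's: the same three probe families (vectors supported on a single block to force block-diagonality, vectors repeated across blocks to equalize the diagonal blocks, and the vector $(e_1,\dots,e_m)$ to pin the common block to a single power of $M$). The only cosmetic difference is that you use the full diagonal $(u,u,\dots,u)$ where the paper uses vectors with $x$ in just two blocks $i \neq j$; both yield $N_1 = \dots = N_m$ by the same mechanism of a shared exponent.
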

\begin{proof}
	Let $B \in \GL_n(\F)$ be an element that preserves the orbits of $\langle T \rangle$, i.e., $B \in \bar{\langle T \rangle}$. This means that for all vectors $v = (x_1, \dots, x_m) \in V$, where $x_i$ is an $m$-dimensional vector, there exists an integer $p_v$ such that $$vB = vT^{p_v}\;.$$ Let $B$ has the block form, $B = (B_{ij})_{i,j \in \range{m}}$, where each $B_{ij} \in \M_{m \times m}(\F)$. 
	Put $v = v_i = (0, \dots, 0, x, 0 ,\dots, 0)$ with the vector $x \in \mathbb{F}_q^m$ on $i$th position, $i \in \range{m}$. Then the equality becomes,
	\begin{equation*}
	(x B_{i1}, x B_{i2}, \dots, x B_{im}) = (0, \dots, 0, xM^{p_{v_i}}, 0, \dots, 0)\;.
	\end{equation*}
	Since the vector $x \in \mathbb{F}_q^m$ is arbitrary, $B_{ij} = 0$ for all $i \neq j \in \range{m}$.
	
	Put $v = v_{i,j} \in \mathbb{F}_q^n$, the vector with $x \in \mathbb{F}_q^m$ on $i$th and $j$th positions, $i \neq j \in \range{m}$, and $0$ everywhere else.
	We get then two equalities,
	$$xB_{ii} = x M^{p_{v_{i,j}}} \;, xB_{jj} = x M^{p_{v_{i,j}}}\;,$$
	for each $x \in \mathbb{F}_q^m$. Therefore, $B_{ii} = B_{jj}$ for all $i,j \in \range{m}$.
	
	Put $v = v_0 = ( e_1, e_2, \dots, e_m)$, where $e_i \in \mathbb{F}_q^m$ is a vector with $1$ on the $i$th position and $0$ everywhere else.
	Then, for any $i \in \range{m}$,
	$$e_i B_{11} = e_i M^{p_{v_{0}}}\;,$$
	which implies $B_{11} = M^{p_{v_0}}$, for some integer $p_{v_0}$. Hence $B = T^{p_{v_0}}$, $B \in \langle T \rangle$ and thus $\bar{\langle T \rangle} \subseteq \langle T \rangle$. Therefore, $\langle T \rangle$ is closed.
\end{proof}

\begin{remark}\label{rem:1}
	Let $m \geq 2$ be an integer and let $n > m^2$ be another integer. Consider the block diagonal $n \times n$ matrix,
	\begin{equation*}
	T' = \left(\begin{matrix}
	T & 0\\
	0 & I
	\end{matrix}\right)\;,
	\end{equation*}
	where $T$ is an $m^2 \times m^2$ matrix defined above and $I \in \GL_{n - m^2}(\F)$ is an identity matrix.
	Based on the proof of \Cref{lemma:a1}, it is easy to see that the group $\langle T' \rangle$ is a closed subgroup of $\GL_n(\F)$.
\end{remark}

\begin{proposition}\label{thm:geq4}
	If $\dim_\F V \geq 4$, then $V$ is not $\langle T' \rangle$-pseudo-injective.
\end{proposition}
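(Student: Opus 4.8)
The plan is to build an explicit counterexample to the extension condition in \Cref{def:psinj}: for $V$ of $\F$-dimension $d \geq 4$ I want a subspace $U \subseteq V$ and an injective $\F$-linear, orbit-preserving map $f\colon U \to V$ admitting no extension inside $\overline{\langle T' \rangle}$. The whole strategy rests on choosing the inner block $M$ so that the (closed) group $\langle T' \rangle$ has the smallest possible closure. Take $m = 2$ and let $M \in \GL_2(\F)$ be a Singer cycle, i.e. the companion matrix of a primitive quadratic polynomial, so that under an identification $\F^2 \cong \mathbb{F}_{q^2}$ the group $\langle M \rangle$ acts as multiplication by a generator $\theta$ of $\mathbb{F}_{q^2}^{\times}$. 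Then the $4\times 4$ block $T = \mathrm{diag}(M,M)$ acts on $\F^4 \cong \mathbb{F}_{q^2}^2$ as the single scalar $\theta$, so $\langle T \rangle$ is exactly the group of $\mathbb{F}_{q^2}$-scalar multiplications. If $\dim_\F V = 4$ I take the group to be $\langle T \rangle$ (the matrix $T'$ with empty identity block), which is closed by \Cref{lemma:a1}; if $\dim_\F V > 4$ I take $\langle T' \rangle$ with a $(\dim_\F V - 4)$-dimensional identity padding, which is closed by \Cref{rem:1}. In either case the group equals its own closure, and its nonzero orbits meeting the first block are precisely the punctured $\mathbb{F}_{q^2}$-lines of $\mathbb{F}_{q^2}^2$, while the padding coordinates are fixed pointwise.

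Next I would produce the local-global obstruction. Fix a nonzero $v$ in the first block and set $U = \mathbb{F}_{q^2}\, v \subseteq V$, an $\F$-subspace of dimension $2$; since $\dim_\F U = 2$, \Cref{lemma:a2} does not force extendability and there is room for a counterexample. Define $f\colon U \to V$ by $f(cv) = c^{q} v$ for $c \in \mathbb{F}_{q^2}$, i.e. Frobenius applied to the $\mathbb{F}_{q^2}$-scalar while fixing $v$ itself. This map is $\F$-linear because $c \mapsto c^q$ is additive and fixes $\F$; it is injective because Frobenius is a bijection of $\mathbb{F}_{q^2}$; and it preserves orbits, since the only nonzero orbit meeting $U$ is the punctured line $U \setminus \{0\} = \mathbb{F}_{q^2}^{\times} v$, into which $f$ clearly maps it. The key point to make explicit is that $f$ sends $U$ back into $U$ (it rescales $v$ by $c^q$ rather than Frobenius-twisting the coordinates of $v$), so it really is a map $U \to V$ fixing the single relevant line setwise, and is \emph{not} the global Frobenius of $V$, which would permute the $\mathbb{F}_{q^2}$-lines.

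Finally I would check nonextendability. Every element of $\overline{\langle T' \rangle} = \langle T' \rangle$ is a power $(T')^{p}$, whose restriction to the first block is multiplication by $\theta^{p} \in \mathbb{F}_{q^2}^{\times}$; hence on $U$ it acts as $cv \mapsto \theta^{p} c\, v$, an $\mathbb{F}_{q^2}$-linear self-map of $U$. But $f$ acts as $cv \mapsto c^{q} v$, and $c \mapsto c^{q}$ is not multiplication by any $\mathbb{F}_{q^2}$-scalar, so no power $(T')^{p}$ restricts to $f$ on $U$. Thus $f$ has no extension in $\overline{\langle T' \rangle}$, and by \Cref{def:psinj} the space $V$ is not $\langle T' \rangle$-pseudo-injective. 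The main obstacle, and the reason $\dim_\F V \geq 4$ is required, is guaranteeing that the closure of the group is no larger than the $\mathbb{F}_{q^2}$-scalars: this is exactly what \Cref{lemma:a1} and \Cref{rem:1} supply, and it is precisely what fails in dimension $2$, where by \Cref{thm:lt2} the analogous Singer group has closure all of $\GL(V)$ and the Frobenius map does extend. Once closedness pins $\overline{\langle T' \rangle}$ down to the scalars, the Frobenius map is the clean witness to the failure of pseudo-injectivity.
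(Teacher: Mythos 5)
Your proposal is correct and takes essentially the same approach as the paper: the same Singer-cycle group with $m=2$, the same reduction to $\bar{\langle T' \rangle} = \langle T' \rangle$ via \Cref{lemma:a1} and \Cref{rem:1}, and the same two-dimensional subspace $U$ (an $\mathbb{F}_{q^2}$-line meeting only the orbits $\{0\}$ and $U \setminus \{0\}$). The only difference is the witness map --- you use the Frobenius twist $cv \mapsto c^q v$ where the paper swaps the two basis vectors $a \leftrightarrow b$ --- but the obstruction is identical in both cases: every element of $\langle T' \rangle$ restricts to an $\mathbb{F}_{q^2}$-scalar multiplication on $U$, and neither map is such a scalar.
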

\begin{proof}
	Let $M \in \GL_2(\F)$ be a matrix of multiplicative order $q^2 - 1$. Such a matrix exists: choose $M$ to be a multiplication matrix of an element $\omega \in \mathbb{F}_{q^2}$, where $\mathbb{F}_{q^2}$ is a finite field extension of $\F$ and $\omega$ is a generator element of the multiplicative group $\mathbb{F}_{q^2}^*$.
	
	Denote $G = \langle T' \rangle$. Choose a subspace $U \subset V$ of dimension $2$ generated by the vectors $a = (1,0,0,\dots,0)$ and $b = (0,1,0,\dots,0)$. The subspace $U$ intersects only 2 orbits of $V/G$, $\{0\}$ and $U \setminus \{0\}$. Really, since we chose $M$ to be a multiplication by a generator of a multiplicative group of $\mathbb{F}_{q^2}$, any nonzero vector in $U$ can by mapped by some element in $G$ to any nonzero vector in $U$.
	
	A map $f: U \rightarrow V$  defined by, $f(a) = b$ and $f(b) = a$, is an automorphism of the vector space $U \subset V$. Obviously, $f$ preserves the orbits of $G$.
	
	By contradiction, assume that $f$ extends to an element of $\bar{G}$. As we proved in \Cref{lemma:a1} (see also \Cref{rem:1}), $G = \bar{G}$. Hence, there exists an element $g \in G$ such that for any $u \in U$, $f(u) = g(u)$. From the construction, the elements of $G$ acts by a multiplication on the first pair of coordinates. Since we are interested in only two first coordinates, let $a \in \mathbb{F}_{q^2}$ be an element that corresponds to the vector $(1,0)$ and let $b \in \mathbb{F}_{q^2}$ corresponds to $(0,1)$. Hence, there exists an element $x \in \mathbb{F}_{q^2}$ such that $f(a) = xa$ and $f(b) = xb$. But from the other side, $f(a) = b$ and $f(b) = a$. Combining these equalities we get,
	\begin{equation*}
	x a = b \text{ and } xb = a.
	\end{equation*}
	From this, $x^2 = 1$ and $a^2 = b^2$. Rewriting the last equality, $a^2 - b^2 = (a - b) (a + b) =0 $ that implies $a = b$ or $a = - b$ which is impossible, since $a$ and $b$ are linearly independent over $\F$. From the contradiction, $f$ does not extend to an element of $\bar{G}$ and hence $V$ is not $G$-pseudo-injective. 
\end{proof}
\subsection{Three-dimensional spaces}
At first observe the case $q \neq 2$ and consider the following matrix $X \in \GL_3(\F)$,
\begin{equation*}
X = \left(\begin{matrix}
M & 0\\
0 & \det M  
\end{matrix}\right)\;,
\end{equation*}
where $M$ is from the proof of \Cref{thm:geq4}, and is a multiplication matrix of the element $\omega \in \mathbb{F}_{q^2}$. Additionally assume that $M$ is represented in the basis $1, \omega$.

\begin{proposition}\label{thm:dim3}
	Let $V$ be a $3$-dimensional $\F$-linear vector space, $q \neq 2$. The space $V$ is not $\langle X \rangle$-pseudo-injective.
\end{proposition}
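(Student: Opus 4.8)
The plan is to imitate the strategy of \Cref{thm:geq4}: produce a $2$-dimensional subspace $U \subset V$ meeting only two orbits of $\langle X \rangle$, together with an $\F$-linear automorphism of $U$ that preserves the orbits but is not the restriction of any element of $\bar{\langle X\rangle}$. Throughout I would identify $V$ with $\mathbb{F}_{q^2}\oplus\F$ as an $\F$-space, so that $X$ acts by $(z,t)\mapsto(\omega z, (\det M)\,t)$. Since $M$ is the multiplication-by-$\omega$ matrix in the basis $1,\omega$, one has $\det M = N(\omega)$, where $N = N_{\mathbb{F}_{q^2}/\F}$ is the field norm, and because $\omega$ generates $\mathbb{F}_{q^2}^*$ the element $g:=N(\omega)$ generates $\F^*$ and has order $q-1$. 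I set $U = \mathbb{F}_{q^2}\oplus\{0\}$.

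First I would record the orbit structure near $U$. The punctured subspace $U\setminus\{0\} = \mathbb{F}_{q^2}^*\oplus\{0\}$ is a single orbit (as $\omega$ generates $\mathbb{F}_{q^2}^*$), and $U$ meets no other orbit except $\{0\}$. Consequently \emph{every} element of $\GL(U)$ preserves the orbits of $\langle X\rangle$, so all of $\GL(U)$ supplies admissible candidate maps $f$; the whole question is which of them extend, i.e.\ which lie in $\bar{\langle X\rangle}\big|_U$. To analyse this, take $B\in\bar{\langle X\rangle}$. Preservation of the two orbits $\mathbb{F}_{q^2}^*\oplus\{0\}$ and $\{0\}\oplus\F^*$ forces $B$ to preserve the complementary subspaces $U$ and $\{0\}\oplus\F$, whence $B=\operatorname{diag}(\psi,c)$ with $\psi=B|_U\in\GL(U)$ and $c\in\F^*$.

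The heart of the argument is to show that orbit preservation forces $\psi$ to be a \emph{similitude} of the norm form. Imposing that $B$ preserve the single generic orbit $O=\{(\omega^k,g^k)\mid k\}$ through $(1,1)$, one needs, for each $k$, an index $j$ with $\psi(\omega^k)=\omega^j$ and $c\,g^k=g^j$. Writing $c=g^{k_0}$ and reading the second equation modulo the order $q-1$ of $g$ gives $j\equiv k+k_0 \pmod{q-1}$; combining this with $N(\omega^m)=\omega^{m(q+1)}=g^m$ yields exactly $N(\psi(z))=c\,N(z)$ for all $z\in\mathbb{F}_{q^2}$. Thus $\psi$ scales the anisotropic binary form $N$ by $c$. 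Now the similitude group of $N$ is precisely $\Gamma L_1(\mathbb{F}_{q^2})=\{z\mapsto v z^{q^b}\mid v\in\mathbb{F}_{q^2}^*,\ b\in\{0,1\}\}$: each such map scales $N$ by $N(v)$, and since every similitude is a scaling composed with an isometry (the norm being surjective onto $\F^*$, and the isometries $z\mapsto uz^{q^b}$ with $N(u)=1$ lying in $\Gamma L_1$), no further elements occur; the order count $|\Gamma L_1(\mathbb{F}_{q^2})|=2(q^2-1)$ confirms this. Hence $\bar{\langle X\rangle}\big|_U\subseteq \Gamma L_1(\mathbb{F}_{q^2})$.

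Finally I would invoke the hypothesis $q\neq 2$. Comparing orders, $|\Gamma L_1(\mathbb{F}_{q^2})|=2(q^2-1)$ while $|\GL(U)|=(q^2-1)(q^2-q)$, and $2(q^2-1)<(q^2-1)(q^2-q)$ exactly when $q(q-1)>2$, that is for every $q\neq 2$. Choosing any $f\in\GL(U)\setminus\Gamma L_1(\mathbb{F}_{q^2})$, the map $f$ preserves the orbits of $\langle X\rangle$ but does not belong to $\bar{\langle X\rangle}\big|_U$, so it does not extend to an element of $\bar{\langle X\rangle}$; therefore $V$ is not $\langle X\rangle$-pseudo-injective. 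The main obstacle is the similitude step: the delicate point, in contrast to \Cref{thm:geq4}, is that here $\bar{\langle X\rangle}$ genuinely contains Frobenius-semilinear maps, so the true obstruction to extending $f$ is failure of semilinearity rather than of $\mathbb{F}_{q^2}$-linearity, and it is precisely at $q=2$ (where $\GL_2(\F)=\Gamma L_1(\mathbb{F}_{q^2})$) that no such $f$ exists.
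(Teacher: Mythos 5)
Your proof is correct, and it reaches the conclusion by a genuinely different route than the paper. The paper works with a single explicit witness: the swap $f$ exchanging $(1,0,0)$ and $(0,1,0)$ on $U=\langle (1,0,0),(0,1,0)\rangle_\F$ (the same subspace as your $\mathbb{F}_{q^2}\oplus\{0\}$). Assuming an extension $g\in\bar{\langle X\rangle}$, it evaluates orbit preservation at just three vectors: $(0,0,1)$ forces $g$ to be block diagonal, $(1,1,1)$ forces the third diagonal entry $\gamma=1$ (a power of a multiplication matrix fixing a nonzero vector is the identity), and $(1,0,1)$ then forces $\det M=N(\omega)=1$, contradicting surjectivity of the norm $N:\mathbb{F}_{q^2}^*\rightarrow\mathbb{F}_q^*$ once $q\neq 2$. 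You instead describe the whole restriction $\bar{\langle X\rangle}\big|_U$: after the same block-diagonal reduction, preservation of the generic orbit through $(1,1)$ gives the similitude identity $N(\psi(z))=c\,N(z)$ --- your modular bookkeeping with $j\equiv k+k_0 \pmod{q-1}$ is correct, and it subsumes the paper's pointwise evaluations --- and you finish by comparing $\card{\Gamma L_1(\mathbb{F}_{q^2})}=2(q^2-1)$ with $\card{\GL_2(\F)}=(q^2-1)(q^2-q)$. Your route buys more insight: it isolates the precise obstruction (failure of $q$-semilinearity rather than of $\mathbb{F}_{q^2}$-linearity, correctly noting that $\bar{\langle X\rangle}$ really does restrict to Frobenius-type maps on $U$), it explains structurally why $q=2$ must be excluded ($\GL_2(\mathbb{F}_2)=\Gamma L_1(\mathbb{F}_4)$, consistent with \Cref{thm:f23}), and it shows that in fact most elements of $\GL(U)$ fail to extend. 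What the paper's computation buys is self-containedness and an explicit unextendable map, with the norm's surjectivity as the only imported fact.

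One repair you should make: your justification that the similitude group of $N$ is exactly $\Gamma L_1(\mathbb{F}_{q^2})$ is circular as written --- ``the order count confirms this'' presupposes that the isometry group has order $2(q+1)$, which is the point at issue. Fortunately only the inclusion of the similitude group in $\Gamma L_1(\mathbb{F}_{q^2})$ is needed, and it has a short proof valid in every characteristic: since $N$ is onto $\mathbb{F}_q^*$, a similitude with factor $c$ is a norm-$c$ scalar times an isometry; if $\sigma$ is an isometry normalized so that $\sigma(1)=1$, then from $N(z+w)=N(z)+N(w)+\operatorname{Tr}(zw^q)$ the map $\sigma$ preserves the polar form, so $\operatorname{Tr}(\sigma(\omega))=\operatorname{Tr}(\omega)$ and $N(\sigma(\omega))=N(\omega)$; hence $\sigma(\omega)$ is a root of $x^2-\operatorname{Tr}(\omega)x+N(\omega)$, i.e. $\sigma(\omega)\in\{\omega,\omega^q\}$, so $\sigma$ is the identity or the Frobenius, and every similitude lies in $\Gamma L_1(\mathbb{F}_{q^2})$. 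With this patch your argument is complete.
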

\begin{proof}
	Denote $G = \langle X \rangle < \GL_3(\F)$. Since $\det: \GL_2(\F) \rightarrow \mathbb{F}_q^*$ is a multiplicative function, $(\det M)^n = \det M^n$ and thus any elements of $G$ is equal to
	\begin{equation*}
	X^n = \left(\begin{matrix}
	M^n & 0\\
	0 & \det M^n 
	\end{matrix}\right)\;,
	\end{equation*}
	for some positive integer $n$.
	
	Consider a subspace $U = \langle (1,0,0), (0,1,0) \rangle_\F  \subset V$. Define an $\F$-linear map $f: U \rightarrow V$ as $f((1,0,0)) = (0,1,0)$ and $f((0,1,0)) = (1,0,0)$. As in the proof of \Cref{thm:geq4}, $f(U) = U$ and $U$ intersects only two orbits of $V/G$, $\{0\}$ and $U \setminus \{0\}$.
	
	By contradiction, assume that there exists $g \in \bar{G}$ such that $g = f$ on $U$. Then $g$ has the following form,
	\begin{equation*}
	g = \left(\begin{matrix}
	0 & 1 & 0\\
	1 & 0 & 0\\
	\alpha & \beta & \gamma 
	\end{matrix} \right)\;,
	\end{equation*}
	for some $\alpha, \beta \in \F$, $\gamma \in \mathbb{F}_q^*$.
	Use the fact that $g$ preserves the orbits of $G$. There exists an integer $p_1$ such that $(0,0,1) g = (0,0,1)X^{p_1}$, or the same, $(\alpha, \beta, \gamma) = (0, 0, (\det M)^{p_1})$. Hence $\alpha = \beta = 0$.
	
	For $v = (x, y, z) \in V$ there exists an integer $p_v$ such that $(x, y, z) g = (y, x, \gamma z) = (x, y, z) X^{p_v} = ( (x, y) M^{p_v}, z (\det M)^{p_v})$.
	Put $v = v_1 = (1,1,1)$. Then 
	$$((1,1),\gamma) = ((1, 1)M^{p_{v_1}}, \det M^{p_{v_1}})\;.$$
	Since $M^{p_{v_1}}$ is a multiplication matrix, it fixes a nonzero element in $\mathbb{F}_q^2$ if and only if $M^{p_{v_1}} = I$. From the equality on the third coordinate, $\gamma = \det I = 1$.
	
	Put $v = v_2 = (1,0,1)$. Then we have
	$$((0,1), 1) = ((1, 0)M^{p_{v_2}}, \det M^{p_{v_2}})\;.$$
	We chose $M$ to be such that $(1,0)M = (0,1)$ and therefore $M^{p_{v_2}} = M$. Hence $\det M^{p_{v_2}} = \det M = 1$. Recall $\omega$ is a generator of $\mathbb{F}_{q^2}^*$. It is a well-known fact that the norm $N: \mathbb{F}_{q^2}^* \rightarrow \mathbb{F}_q^*$, defined as $N(\omega^k) = (\det M)^k$ is an onto map, see \cite[pp.~284 -- 291]{lang}.
	Since $\F \neq \mathbb{F}_2$, $\mathbb{F}_q^* \neq \{1\}$, we get a contradiction. Therefore, $V$ is not $G$-pseudo-injective.
\end{proof}

Now, observe the case $q = 2$ and $V = \mathbb{F}_2^3$.
From \Cref{lemma:a2} we know that if $\dim_{\mathbb{F}_2} U \leq 1$, then $f: U \rightarrow V$ that preserves the orbits of $G$ extends to an element of $\bar{G}$ for any subgroup $G$. The same for $U = V$. Therefore we have to check the case $\dim_{\mathbb{F}_2} U = 2$.

Let $G$ be a subgroup of $\GL_3(\mathbb{F}_2)$. Let $U$ be a subspace of dimension $2$, $U = \langle a, b \rangle_{\mathbb{F}_2} = \{ 0, a, b, a+ b \}$. Let $f: U \rightarrow V$ be an injective $\mathbb{F}_2$-linear map preserving the orbits of $G$. Denote $f(a) = c$ and $f(b) = d$. Then $f(a + b) = c+ d$. The map $f$ preserves the orbits of $G$ if and only if the elements in the pairs $(a, c)$, $(b, d)$ and $(a+b, c+ d)$ belong to the same orbits.

Assume that $f$ is unextendable to an element of $\bar{G}$. Since $a$ and $c$ are in the same orbit of $G$, there exists an element $g \in G$ such that $g(c) = a$. Then, the map $gf$ does not extends to an element of $\bar{G}$. From now suppose that $f(a) = a$.

Let $H \in \GL_3(\mathbb{F}_2)$ be a matrix of change of basis such that $aH = (1,0,0)$ and $bH = (0,1,0)$. Then the map $Hf: HU \rightarrow V$ preserves the orbits of the group $G^H = \{ H^{-1}gH \mid g \in G \}$ and is unextendable to an element of $\bar{G^H}$. From now we suppose that $a = (1,0,0)$ and $b = (0,1,0)$.

Let $G_a \leq G$ be a subgroup that contains all the elements of $G$ that fix $a$. From \Cref{thm:closed-subgroup}, $G_a$ is closed and $f$ does not extend to an element of $G_a$. Therefore we can assume that $G$ is a closed group that fixes $a$.
The final computational problem that we have to solve is the following.

\begin{enumerate}
	\item Find all closed subgroups $G$ of $\GL_3(\mathbb{F}_2)$ that have a one-element orbit $O_a = \{(1,0,0)\}$.
	\item For all such $G$, let $O_b \subseteq \mathbb{F}_2^3$ be an orbit of $G$ that contains $(0,1,0)$. Check that for any $c \in O_b$, such that $(1,0,0) + c$ and $(1,1,0)$ are in the same orbit of $G$, there exists $g \in G$ such that $(1,0,0)g = (1,0,0)$ and $(0,1,0)g = c$.
\end{enumerate}

To compute all the closed subgroups of $\GL_3(\mathbb{F}_2)$ use the fact that the closed subgroups are in relation with the partition of $\mathbb{F}_2^3$ into orbits of subgroups in $\GL_3(\mathbb{F}_2)$, see \Cref{thm:poset-closed}. The idea is the following, we first compute the orbits of all cyclic subgroup, then we calculate all the elements in the poset $\mathcal{P}$ built on $\GL_3(\mathbb{F}_2)$. Now, having all possible partitions into orbits we can easily calculate all closed subgroups of $G$.

\begin{proposition}\label{thm:f23}
	For any subgroup $G \leq \GL_3(\mathbb{F}_2)$ the space $\mathbb{F}_2^3$ is $G$-pseudo-injective.
\end{proposition}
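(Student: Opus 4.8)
The plan is to complete the finite verification prepared by the reductions above. After composing $f$ with an element of $G$ and changing basis we may assume $a=(1,0,0)$, $b=(0,1,0)$, $f(a)=a$, $f(b)=c$, with $G$ closed; let $G_a=\{g\in G\mid ag=a\}$ be the stabiliser of $a$, which is closed by \Cref{thm:closed-subgroup}. By linearity, $f$ extends to $\bar G=G$ if and only if some $g\in G_a$ satisfies $bg=c$, that is, if and only if $c$ lies in the $G_a$-orbit of $b$. On the other hand, $f$ preserving the orbits of $G$ means exactly that $c$ lies in the $G$-orbit of $b$ and that $a+c$ lies in the $G$-orbit of $a+b=(1,1,0)$. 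Hence the proposition is equivalent to the inclusion: every \emph{admissible} $c$, i.e.\ every $c$ meeting these two $G$-orbit conditions, already lies in the $G_a$-orbit of $b$. The reverse inclusion is free, since $a+bg=(a+b)g$ for $g\in G_a$; and when $G$ itself fixes $a$ the statement is immediate, because then $x\mapsto a+x$ is $G$-equivariant and the second condition reduces to the first. The work is therefore in the general closed $G$.

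I would settle the general case by enumerating all closed subgroups of $\GL_3(\mathbb{F}_2)$ and checking the inclusion for each, which is feasible because everything is small: $\mathbb{F}_2^3$ has $7$ nonzero vectors and $\card{\GL_3(\mathbb{F}_2)}=168$. Following the recipe indicated before \Cref{thm:poset-closed}, I first compute, for every cyclic subgroup $\langle M\rangle$, the partition of the $7$ nonzero vectors into its orbits. By \Cref{thm:poset-closed} the orbit-partition of any subgroup is the join of such cyclic partitions, so forming all joins in the poset $\mathcal{P}$ produces the complete list of partitions realised by closed subgroups; each partition then recovers its closed subgroup as the common setwise stabiliser of its blocks. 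Since $a$ and $b$ are now fixed, I would run the check once per closed $G$, using the residual symmetry---the subgroup of $\GL_3(\mathbb{F}_2)$ stabilising the ordered pair $(a,b)$---to conjugate away redundant cases and keep the list short.

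For each closed $G$ the verification is mechanical: read off the $G$-orbit of $b$, the $G$-orbit of $(1,1,0)$, and the $G_a$-orbit of $b$, then confirm that imposing both orbit conditions leaves no admissible $c$ outside the $G_a$-orbit of $b$. The main obstacle is the role of the second condition, namely that $a+c$ lie in the $G$-orbit of $(1,1,0)$: the first condition alone can admit targets $c$ escaping the $G_a$-orbit of $b$, so the heart of the computation is to confirm that this $(1,1,0)$-condition always excludes them. For the full group $\GL_3(\mathbb{F}_2)$ this is transparent, as its $2$-transitivity makes $G_a$ transitive on the six vectors outside $\langle a\rangle$; the delicate cases are the intermediate closed subgroups, and it is precisely here that $q=2$ is special---over larger fields the surjectivity of the norm creates a genuine obstruction, exactly as exploited in \Cref{thm:dim3}. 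Given the modest sizes, I would organise the whole computation around the poset $\mathcal{P}$ and carry out the per-subgroup checks with a computer algebra system to guarantee completeness.
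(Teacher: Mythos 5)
Your proposal takes essentially the same route as the paper: the identical reduction (normalising $f(a)=a$, changing basis so $a=(1,0,0)$, $b=(0,1,0)$, and passing to the closed stabiliser via \Cref{thm:closed-subgroup}), followed by enumerating the closed subgroups of $\GL_3(\mathbb{F}_2)$ through the poset of orbit-partitions using \Cref{thm:poset-closed} and verifying the orbit-inclusion condition mechanically. The paper's own proof is exactly this finite verification, delegated to a SAGE computation, so your plan is a faithful reconstruction of it.
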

\begin{proof}
	The result is computed. See \cite{git-code} for a SAGE source code.
\end{proof}

\subsection{Main result}
\begin{proposition}\label{thm:pi-of-vector-spaces}
	Let $\F$ be a finite field and let $V$ be an $n$-dimensional $\F$-linear vector space. The space $V$ is $G$-pseudo-injective for any subgroup $G \leq \GL(V)$ if and only if $n < 3$ or $V = \mathbb{F}_2^3$.
\end{proposition}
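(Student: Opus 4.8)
The plan is to assemble the equivalence from the propositions already established in this section, handling the two implications separately, since all the analytic and computational work has been carried out in the preceding results.

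For sufficiency I would show that each of the two conditions $n < 3$ and $V = \mathbb{F}_2^3$ forces $G$-pseudo-injectivity for every subgroup $G \leq \GL(V)$. The case $n < 3$, that is $\dim_\F V \leq 2$, is exactly \Cref{thm:lt2}, and the case $V = \mathbb{F}_2^3$ is exactly \Cref{thm:f23}. As the hypothesis on $V$ is the disjunction of these two conditions, this settles the ``if'' direction at once.

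For necessity I would argue by contraposition: assuming $n \geq 3$ and $V \neq \mathbb{F}_2^3$, I would exhibit one closed subgroup $G \leq \GL(V)$ witnessing the failure of $G$-pseudo-injectivity. The key bookkeeping observation is that, writing $q$ for the order of $\F$, the space $V$ equals $\mathbb{F}_2^3$ precisely when $q = 2$ and $n = 3$; hence the assumption ``$n \geq 3$ and $V \neq \mathbb{F}_2^3$'' is equivalent to the disjunction ``$n \geq 4$'' or ``$n = 3$ and $q \neq 2$''. In the first case I would invoke \Cref{thm:geq4}, taking $m = 2$ so that $m^2 = 4 \leq n$ and padding by an identity block as in \Cref{rem:1} when $n > 4$, to obtain the closed group $\langle T' \rangle$ for which $V$ is not pseudo-injective. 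In the second case I would invoke \Cref{thm:dim3} to obtain the closed group $\langle X \rangle$ doing the same job.

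The proof is therefore purely organizational, and the only point demanding care is checking that the case distinction is exhaustive and correctly matches the hypotheses of \Cref{thm:geq4,thm:dim3}. In particular I would verify that the boundary case $n = 3$, $q = 2$ is genuinely ruled out by the assumption $V \neq \mathbb{F}_2^3$, so that no counterexample group is needed there, in agreement with \Cref{thm:f23}. There is no substantive obstacle remaining at this stage: the genuine difficulty was already discharged in the explicit constructions behind \Cref{thm:geq4,thm:dim3} and in the computational \Cref{thm:f23}.
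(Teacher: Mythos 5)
Your proposal is correct and matches the paper's proof, which consists precisely of citing \Cref{thm:lt2}, \Cref{thm:geq4}, \Cref{thm:dim3} and \Cref{thm:f23}; your case analysis (sufficiency from \Cref{thm:lt2} and \Cref{thm:f23}, necessity via the exhaustive split into $n \geq 4$ and $n = 3$, $q \neq 2$) simply makes explicit the bookkeeping the paper leaves implicit. No gaps.
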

\begin{proof}
	See \Cref{thm:lt2}, \Cref{thm:geq4}, \Cref{thm:dim3} and \Cref{thm:f23}.
\end{proof}

Applying this result to the codes over vector space alphabets we get the following.
\begin{corollary}\label{thm:EP-for-1-length-codes}
	Let $\F$ be a finite field and let $A$ be an $\ell$-dimensional $\F$-linear vector space. There exist a code $C \subset A^1$, a subgroup $G < \GL(A)$ and an unextendable $\swc_G$-isometry $f \in \Hom_\F(C,A^1)$ if and only if $\ell \geq 3$ or $\ell = 3$, $q \neq 2$.
\end{corollary}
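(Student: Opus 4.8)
The plan is to reduce the corollary to the two results already in hand: the length-$1$ extension criterion of \Cref{thm:pi-ep} and the classification of pseudo-injective vector spaces in \Cref{thm:pi-of-vector-spaces}. The guiding observation is that the corollary's existential statement and the universal statement of \Cref{thm:pi-of-vector-spaces} are exact logical negations of one another, so the substance of the argument is already established earlier; what remains is a careful matching of quantifiers and a short Boolean simplification.

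First I would fix $A$ of dimension $\ell$ and observe, via \Cref{thm:pi-ep}, that for a \emph{fixed} subgroup $G < \GL(A)$ the following are equivalent: (i) there exists a code $C \subset A^1$ and an unextendable $\swc_G$-isometry $f \in \Hom_\F(C,A^1)$; and (ii) $A$ is not $G$-pseudo-injective. (Here one uses that for $n=1$ the symmetric group $S_1$ is trivial, so a $G$-monomial map on $A^1$ is simply an element of $\bar{G}$, which is exactly the extension target in \Cref{def:psinj}.) Quantifying over $G$, the existential claim in the corollary is therefore equivalent to the assertion that $A$ fails to be $G$-pseudo-injective for \emph{at least one} subgroup $G < \GL(A)$.

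Next I would invoke \Cref{thm:pi-of-vector-spaces}, which states that $A$ is $G$-pseudo-injective for \emph{every} $G$ precisely when $\ell < 3$ or $A = \mathbb{F}_2^3$. Negating this universal statement yields: $A$ fails to be $G$-pseudo-injective for some $G$ if and only if $\ell \geq 3$ and $A \neq \mathbb{F}_2^3$. Since $A = \mathbb{F}_2^3$ means exactly $\ell = 3$ together with $q = 2$, the condition $A \neq \mathbb{F}_2^3$ reads $\ell \neq 3$ or $q \neq 2$. Combining, the existential statement holds if and only if $\ell \geq 3$ and ($\ell \neq 3$ or $q \neq 2$), which simplifies to: $\ell \geq 4$, or $\ell = 3$ with $q \neq 2$, giving the claimed equivalence.

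The only genuinely delicate point — and the one I would write out most carefully — is the quantifier bookkeeping in the negation step: the corollary asserts existence over $G$, whereas \Cref{thm:pi-of-vector-spaces} is phrased as a property holding for all $G$, so one must be sure that ``not (for all $G$, $G$-pseudo-injective)'' is correctly converted into ``there exists $G$ that is not $G$-pseudo-injective,'' and that the explicit witnesses $\langle T' \rangle$ and $\langle X \rangle$ furnished in \Cref{thm:geq4} and \Cref{thm:dim3} indeed supply such a $G$ in the nontrivial directions. No new construction is required; the entire content is the assembly of the earlier results, together with the elementary Boolean simplification of the final condition.
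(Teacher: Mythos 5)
Your proposal is correct and takes essentially the same route as the paper, whose entire proof of this corollary is the one-line observation that it follows directly from \Cref{thm:pi-ep} and \Cref{thm:pi-of-vector-spaces}; your expansion (the fixed-$G$ equivalence via \Cref{thm:pi-ep}, the negation of the universal statement in \Cref{thm:pi-of-vector-spaces}, and the quantifier bookkeeping) is precisely what the paper leaves implicit. One incidental observation: your simplified condition ``$\ell \geq 4$, or $\ell = 3$ with $q \neq 2$'' is the correct one, and the corollary's printed condition ``$\ell \geq 3$ or $\ell = 3$, $q \neq 2$'' is evidently a typo for it, since as written it would wrongly include $A = \mathbb{F}_2^3$, which \Cref{thm:f23} shows is $G$-pseudo-injective for every $G$.
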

\begin{proof}
	The result follows directly from \Cref{thm:pi-ep} and \Cref{thm:pi-of-vector-spaces}.
\end{proof}

\bibliographystyle{acm}
\bibliography{biblio}
\end{document}